\def\BState{\State\hskip-\ALG@thistlm}
\newcommand{\limfy}[1]{\lim_{#1 \rarr \infty}}
\newcommand{\sfn}{S_f^\N}
\newcommand{\epso}{\overline{\epsilon}}
\newcommand{\epsu}{\underline{\epsilon}}
\newtheorem{problem}{Problem}
\newtheorem{define}{Definition}
\newtheorem{theorem}{Theorem}
\newtheorem{lemma}{Lemma}
\newtheorem{remark}{Remark}
\newtheorem{assume}{Assumption}
\newcommand{\Rc}{\mathcal R}
\newcommand{\R}{\mathbb R}
\newcommand{\N}{\mathcal{N}}
\newcommand{\D}{\mathcal{D}}
\newcommand{\V}{\mathcal{V}}
\newcommand{\G}{\mathcal{G}}
\newcommand{\E}{\mathcal{E}}
\newcommand{\Le}{\mathcal{L}}
\newcommand{\A}{\mathcal{A}}
\newcommand{\J}{\mathcal{J}}
\newcommand{\mb}{\overline{m}}
\newcommand{\Mb}{\overline{M}}
\newcommand{\Z}{\mathbb{Z}}
\newcommand{\pth}[1]{\left(#1\right)} % Variable size parentheses
\newcommand{\brc}[1]{\left \{#1\right \}} % Variable size curly braces
\DeclarePairedDelimiter{\ceil}{\lceil}{\rceil}
\DeclarePairedDelimiter{\floor}{\lfloor}{\rfloor}
\DeclarePairedDelimiter{\abs}{\lvert}{\rvert}
\newcommand{\rarr}{\rightarrow} % Abbreviation for \rightarrow
\let\oldceil\ceil
\def\ceil{\@ifstar{\oldceil}{\oldceil*}}
\let\oldfloor\floor
\def\floor{\@ifstar{\oldfloor}{\oldfloor*}}
\let\oldnorm\norm
\def\norm{\@ifstar{\oldnorm}{\oldnorm*}}
\let\oldabs\abs
\def\abs{\@ifstar{\oldabs}{\oldabs*}}
\newtheorem{corollary}{Corollary}[theorem]
\begin{document}

\title{\LARGE \bf Resilient Leader-Follower Consensus to Arbitrary Reference Values in Time-Varying Graphs}

\author{James Usevitch and Dimitra Panagou
\thanks{James Usevitch is with the Department of Aerospace Engineering, University of Michigan, Ann Arbor; \texttt{usevitch@umich.edu}.
Dimitra Panagou is with the Department of Aerospace Engineering, University of Michigan, Ann Arbor; \texttt{dpanagou@umich.edu}.
The authors would like to acknowledge the support of the Automotive Research Center (ARC) in accordance with Cooperative Agreement W56HZV-14-2-0001 U.S. Army TARDEC in Warren, MI, and of the Award No W911NF-17-1-0526. This work has been funded by the Center for Unmanned Aircraft Systems (C-UAS), a National Science Foundation Industry/University Cooperative Research Center (I/UCRC) under NSF Award No. 1738714 along with significant contributions from C-UAS industry members. }
}

\maketitle
\thispagestyle{empty}
\pagestyle{empty}

\acrodef{wrt}[w.r.t.]{with respect to}
\acrodef{apf}[APF]{Artificial Potential Fields}
\begin{abstract}
%color{magenta}
	Several algorithms in prior literature have been proposed which guarantee consensus of normally behaving agents in a network that may contain adversarially behaving agents. These algorithms guarantee that the consensus value lies within the convex hull of initial normal agents' states, with the exact consensus value possibly being unknown. In leader-follower consensus problems however, the objective is for normally behaving agents to track a reference state that may take on values outside of this convex hull. In this paper we present methods for agents in time-varying graphs with discrete-time dynamics to resiliently track a reference state propagated by a set of leaders despite a bounded subset of the leaders and followers behaving adversarially. Our results are demonstrated through simulations.
\end{abstract}

\IEEEpeerreviewmaketitle

\section{Introduction}

%color{magenta}

Guaranteeing the resilience of multi-agent systems to adversarial misbehavior and misinformation is critically needed in modern autonomous systems. As part of this need, the \emph{resilient consensus problem} has been treated in the literature for several decades. In this problem, normally behaving agents in a multi-agent network seek to come to agreement on one or more state variables in the presence of adversarially behaving agents whose identity is unknown. Several algorithms based upon the \emph{Mean-Subsequence-Reduced} family of algorithms \cite{kieckhafer1994reaching} have been proposed which guarantee consensus of the normally behaving agents when the number of adversaries is bounded and the network communication structure satisfies certain \emph{robustness} properties. These discrete-time algorithms, which include the W-MSR, SW-MSR, DP-MSR,  and QW-MSR algorithms \cite{leblanc2013resilient,dibaji2017resilient,saldana2017resilient,dibaji2018resilient}, guarantee that the final consensus value of the normal agents is within the convex hull of the normal agents' initial states. {  However, the exact final value within this convex hull may depend in part upon the behavior of the adversarial agents.}

 A related problem in prior literature is the \emph{leader-follower consensus problem}, where the objective is for normally behaving agents to come to agreement on the reference value of a leader or set of leaders \cite{Dimarogonas2009leader,Ren2007,Ren2008consensus}. Prior work in this area typically assumes that there are no adversarially misbehaving agents; i.e. all leaders and followers follow the intended control laws.  An interesting direction of research is extending the property of resilience to the leader-follower consensus scenario, i.e., follower agents tracking the leader agents' reference state while rejecting the influence of adversarial agents whose identity is unknown. One aspect which prevents prior resilient consensus results from being extended to this case is that the reference state may not lie within the convex hull of normal agents' initial states.

In addition, leader-follower consensus can be viewed in an adversarial light. In some scenarios the objective of adversarial agents in a network may be to drive as many agents' states as possible towards a malicious value or unsafe set. The misbehaving nodes in a network may be considered ``leaders" seeking to divert normal agents' states to harmful values. Prior results in the literature on MSR-type algorithms guarantee that the adversaries cannot drive normal nodes' states to arbitrary values when the adversarial set is bounded, but do not analyze the extent of the adversaries' influence when these bounds are violated.

 Recent work related to the resilient leader-follower consensus problem includes \cite{leblanc2014,Mitra2016secure,mitra2018byzantine,usevitch2018resilient}. In \cite{leblanc2014}, the problem of resilient distributed estimation is considered where certain "reliable agents" drive the errors of the remaining normal agents to the static reference value of zero in the presence of misbehaving agents. In \cite{Mitra2016secure,mitra2018secure}, the problem of distributed, resilient estimation in the presence of misbehaving nodes is treated. The authors show conditions under which information about the decoupled modes of the system is resiliently transmitted from a group of source nodes to other nodes that cannot observe those modes. Their results guarantee exponential convergence to the reference modes of the system. In our prior work \cite{usevitch2018resilient}, we considered the case of leader-follower consensus to arbitrary static reference values using the W-MSR algorithm \cite{leblanc2013resilient}. In addition, the resilient leader-follower consensus problem is closely related to the \emph{secure broadcast} problem \cite{koo2006reliable2}, where an agent known as a ``dealer" seeks to broadcast a message to an entire network in the presence of misbehaving agents.

In this technical note, we briefly address the problem of resilient leader-follower consensus in the discrete-time domain. Specifically, we make the following contributions:
\begin{itemize}
	\item We demonstrate conditions under which normally behaving agents in time-varying graphs can resiliently track the reference signal of a set of leaders in the presence of a bounded number of arbitrarily misbehaving agents using the Sliding Window Mean-Subsequence-Reduced (SW-MSR) algorithm. To demonstrate these conditions, we introduce the notion of \emph{strong $(T,t_0,r)$-robustness with respect to a subset $S$}, which to the best of our knowledge has not been previously defined.
	\item We demonstrate sufficient conditions under which a properly selected subset of adversarially behaving agents can drive a network of agents appyling the SW-MSR algorithm to any arbitrary value.
\end{itemize}

This paper is organized as follows: in Section \ref{sec:notation} we outline notation used in this paper, in Section \ref{sec:probform} we give the problem formulation, in Section \ref{sec:results} we outline conditions for resilient leader-follower consensus in time-varying graphs, in Section \ref{sec:adv} we discuss the adversarial implications of the results on resilient leader-follower consensus, in Section \ref{sec:sim} we present simulations demonstrating our results, and in Section \ref{sec:concl} we give a conclusion and directions for future work.

\section{Notation}
\label{sec:notation}

The set of real numbers and integers are denoted $\R$ and $\Z$, respectively. The set of nonnegative reals and nonnegative integers are denoted $\R_+$ and $\Z_+$, respectively. 
%The power set of a set $S$ is denoted $2^S$.
The cardinality of a set $S$ is denoted $|S|$. The set union, intersection, and set difference operations of two sets $S_1$ and $S_2$ are denoted by $S_1 \cup S_2$, $S_1 \cap S_2$, and $S_1 \backslash S_2$ respectively. We denote $\bigcup_{i=1}^n S_i = S_1 \cup S_2 \cup \ldots \cup S_n$.
 A digraph of $n$ agents, with $n \in \Z_+$, is denoted as $\D[t] = (\V,\E[t])$ where $\V = \{1,\ldots, n\}$ is the set of labeled agents (represented by the vertices of the graph), and $\E[t] \subset \V \times \V$ is the (possibly time-varying) set of edges. An edge from $i$ to $j$, $i,j \in \V$, denoted as $(i,j) \in \E[t]$, represents the ability of the \emph{head} $i$ to send information to the \emph{tail} $j$ at time $t$. Note that for digraphs $(i,j) \in \E[t]$
%\centernot\implies 
does not necessarily imply that
$(j,i) \in \E[t]$. The set of in-neighbors of agent $i$ is denoted $\V_i[t] = \{j \in \V: (j,i) \in \E[t])\}$. Similar to \cite{leblanc2013resilient}, we define the inclusive neighbor set of node $i$ as $\J_i[t] = \V_i[t] \cup \{i\}$. The set of out-neighbors of each agent $i$ is denoted $\V_i^{out}[t] = \{k \in \V : (i,k) \in \E[t] \}$.

\section{Problem Formulation}
\label{sec:probform}

Consider a digraph of $n$ agents with time-varying edges, denoted $\D[t] = (\V, \E[t])$. Each agent $i \in \V$ has a state $x_i[t] \in \R$. Two types of agents are considered: leader agents (also called ``source" agents) and follower agents. The set of leader agents consists of agents which propagate a desired reference signal to the set of follower agents.
\begin{define}
\label{def:LsetSfset}
	The set of leader agents is denoted $\Le \subset \V$. The set of follower agents is denoted $S_f = \V \backslash \Le$.
\end{define}

\begin{assume}
\label{a:partition}
The sets $\Le$ and $S_f$ are static and satisfy $\Le \cup S_f = \V$ and $\Le \cap S_f = \emptyset$.
\end{assume}

Each normally behaving leader agent $l$ updates its state according to a reference function $f_r : \R \rarr \R$ as follows:
\begin{align}
\label{eq:leader}
	x_l[t+1] = f_r[t].
\end{align}
The precise definition of \emph{normally behaving} will be given in Definition \ref{def:normal}.

The purpose of this paper is to determine conditions under which normally behaving follower agents resiliently achieve consensus with a static reference state of the set of normally behaving leader agents in the presence of a possibly nonempty set of misbehaving agents, where the precise definition of misbehaving agents will be given in Definition \ref{def:misbehaving}.

\begin{problem}
	Given a digraph $\D[t] = (\V,\E[t])$ with a time-varying edge set satisfying Assumption \ref{a:partition}, determine conditions under which $\limfy{t} \max_{i,\ l} |x_i[t] - x_l[t]| = 0$ for all normally behaving follower agents $i$ and for all normally behaving leaders $l$ in the presence of a possibly nonempty misbehaving subset of agents $\A \subset \V$.
\end{problem}

Each normally behaving leader agent is able to send its state value to its out-neighbors at each time $t$.
In addition, each normally behaving follower agent $i \in S_f$ can receive state values from its in-neighbors at each time $t$, and can also send its own state value to its out-neighbors at each time $t$.
\begin{define}
	The value received by agent $i$ from agent $j$ at time $t$ is denoted $x_j^i[t]$.
\end{define}
Since the set of edges $\E[t]$ is time-varying, agents use a sliding-window approach over a time period $T \in \Z_+$ when taking into account information received from their in-neighbors. Let $T' = \min(T,t-t_0),\ t \geq t_0$. At each time $t \geq t_0$, each normally behaving follower agent $i$ considers information received from the set
\begin{align}
\label{eq:Ji}
	\J_i^T[t] &= \bigcup_{\tau = 0}^{T'} \J_i[t - \tau],		
\end{align}
i.e. the union of $i$'s in-neighbor sets over the time interval $[t-T,t]$ if $t \geq t_0 + T$, or $[t_0,t]$ if $t < t_0 + T$.
Each normally behaving follower agent $i$ then updates its state according to the \emph{Sliding Weighted Mean-Subsequence-Reduced} (SW-MSR) algorithm \cite{saldana2017resilient}, which is outlined in Algorithm \ref{alg:swmsr}. In essence, the SW-MSR algorithm causes normally behaving follower agents to update their state based on the most recently received information from each in-neighbor in $\J_i^T[t]$. In addition, agents filter out a subset of the information received based upon a prespecified parameter $F \in \Z_+$.

\begin{algorithm}
\caption{\small{\textsc{SW-MSR Algorithm} \cite{saldana2017resilient}:}}
\label{alg:swmsr}
\begin{enumerate}
	\item At each time step $t$, each agent $i$ forms a sorted list $\Omega_i[t]$ of the most recently received values from its in-neighbors as follows: { 
			\medmuskip=0mu
\thinmuskip=0mu
%\thickmuskip=0mu	
	\begin{align}	
		\tau_{ij}[t] &= \max\pth{\brc{\tau \in [t-T',t] : j \in \J_i[\tau]}},\ \forall j \in \J_i^{T'}[t] \nonumber \\
		\Omega_i[t] &= \{x_j^i[\tau_{ij}[t]] : j \in \J_i^{T'}[t]\}, \label{eq:tau_Omega}
	\end{align}
	}
	with $T' = \min(T,t-t_0)$ and $\J_i^T[t]$ defined in \eqref{eq:Ji}.\footnotemark
	\item If there are less than $F$ values strictly greater than $x_i[t]$ in $\Omega_i[t]$, then agent $i$ removes all values strictly greater than $x_i[t]$ from $\Omega_i[t]$. Otherwise, agent $i$ removes the $F$ largest values from $\Omega_i[t]$.
	\item \emph{In addition}, if there are less than $F$ values strictly less than $x_i[t]$ in $\Omega_i[t]$, then agent $i$ removes all values strictly less than $x_i[t]$ from $\Omega_i[t]$. Otherwise, agent $i$ removes the $F$ smallest values from $\Omega_i[t]$.
	\item Let $\Rc_i[t]$ denote the set of all agent indices whose state values were removed from $\Omega_i[t]$ in steps 2) and 3). Each normal agent $i$ applies the update
	\begin{align}
	\label{eq:follower}
		&x_i[t+1] = u_i[t] \\
		&u_i[t] = \sum_{j \in \J_i^{T'}[t] \backslash \Rc_i[t]} w_{ij}[t] x_j^i[\tau_{ij}[t]] 
	\end{align}
	
	where $\forall t$ and $\forall i \in S_f$ the weights satisfy $w_{ij}[t] \geq \alpha > 0$ $\forall j \in \J_i^{T'}[t]$, and $\sum_{j \in \J_i^{T'}[t] \backslash \Rc_i[t]}w_{ij}[t]$ = 1.
\end{enumerate}
\end{algorithm}

\begin{remark}
	If $T = 0$, the SW-MSR algorithm essentially reduces to the \emph{Weighted Mean-Subsequence-Reduced} (W-MSR) algorithm \cite{leblanc2013resilient}. The SW-MSR algorithm can be seen as a generalization of the W-MSR algorithm to digraphs with time-varying edge sets.
\end{remark}

In contrast to much of the prior literature on leader-follower consensus which typically assumes that all agents apply nominally specified control laws, this paper considers the presence of \emph{misbehaving agents}:
\begin{define}
\label{def:misbehaving}
    An agent $j \in \V$ is \emph{misbehaving} if at least one of the following conditions hold:
    \begin{enumerate}
        \item There exists a time $t$ where agent $j$ does not update its state according to \eqref{eq:leader} and also does not update its state according to \eqref{eq:follower}.
        \item There exists a time $t$ where $j$ does not communicate its true state value $x_j(t)$ to at least one of its out-neighbors; i.e. $\exists t \geq t_0$ and $\exists k \in V_j^{out}[t]$ s.t. $x_j[t] \neq x_j^k[t]$.
        \item There exists a time $t$ where $j$ communicates different values to different out-neighbors; i.e. $\exists t \geq t_0$ and $\exists k_1,k_2 \in V_j^{out}[t]$ s.t. $x_j^{k_1}[t] \neq x_j^{k_2}[t]$.
\end{enumerate}
The set of misbehaving agents is denoted $\A \subset \V$.
\end{define}

\begin{define}
\label{def:normal}
	The set of agents which are \emph{not} misbehaving are denoted $\N = \V \backslash \A$. Agents in $\N$ are referred to as \emph{normally behaving agents}.
\end{define}
Intuitively, misbehaving agents are agents which update their states arbitrarily or communicate false information to their out-neighbors. By Definition \ref{def:misbehaving}, the set of misbehaving agents $\A$ includes both malicious agents and Byzantine agents \cite{leblanc2013resilient}.

This paper considers scenarios where both followers \emph{and leaders} are vulnerable to adversarial attacks and faults, and therefore the set $\A \cap \Le$ may possibly be nonempty, and the set $\A \cap S_f$ may possibly be nonempty. The following notation will be used:
\begin{define}[Misbehaving agent notation]
The set of misbehaving leaders is denoted as $\Le^\A = \Le \cap \A$. The set of misbehaving followers is denoted as $S_f^\A = S_f \cap \A$.
\end{define}

\begin{define}[Normally behaving agent notation]
\label{def:normalsets}
The set of normally behaving leaders is denoted $\Le^\N = \Le \backslash \A$. The set of normally behaving followers is denoted $S_f^\N = S_f \backslash \A$. 
\end{define}

\footnotetext{Observe that by the definition of $\J_i[t]$, $x_i^i[\tau_{ii}[t]] \in \Omega_i[t]$ for all $t \geq t_0$. This implies that the set $\Omega_i[t]$ is never empty at any time, even for $t_0 \leq t < t + T$.}

\subsection{Review of Resilient Consensus Concepts}

This subsection will briefly review several definitions associated with the resilient consensus literature that will be used in this paper.
To quantify the distribution of agents in $\A$ throughout a digraph $\D$, the notions of $F$-total and $F$-local sets are used.

\begin{define}[\cite{leblanc2013resilient}]
Let $F \in \Z_+$. A set $S \subset \V$ is \emph{F-total} if it contains at most $F$ nodes; i.e. $|S| \leq F$.
\end{define}

\begin{define}[\cite{leblanc2013resilient}]
Let $F \in \Z_+$. A set $S \subset \V$ is \emph{F-local} with respect to (w.r.t.) a given $t_0 \in \Z$ if $|S \cap \V_i[t]| \leq F$ $\forall i \in \V \backslash S$, $\forall t \geq t_0$. 
\end{define}

Sufficient conditions for the success of several resilient consensus algorithms involve the graph theoretical notions of $r$-reachability, $r$-robustness, and strong $r$-robustness, which are defined as follows:

\begin{define}[\cite{leblanc2013resilient}]
\label{def:rreach}
Let $r \in \Z_+$ and $\D=(\V,\E)$ be a digraph. A nonempty subset $S \subset \V$ is $r$-reachable if $\exists i \in S$ such that $|\V_i \backslash S| \geq r$.
\end{define}
%The definition of $r$-robustness is then as follows:
\begin{define}[\cite{leblanc2013resilient}]
\label{def:rrobust}
Let $r \in \Z_+$. A nonempty, nontrivial digraph $\D = (\V,\E)$ on $n$ nodes $(n \geq 2)$ is $r$-robust if for every pair of nonempty, disjoint subsets of $\V$, at least one of the subsets is $r$-reachable. By convention, the empty  graph $(n = 0)$ is 0-robust and the trivial graph $(n=1)$ is 1-robust.
\end{define}

\begin{define}[Strong $r$-robustness w.r.t. $S$ \cite{Mitra2016secure}]
\label{def:strongr}
Let $r \in \Z_+$, $\D=(\V,\E)$ be a digraph, and $S \subset \V$ be a nonempty subset. $\D$ is strongly $r$-robust w.r.t. $S$ if for any nonempty subset $C \subseteq \V \backslash S$, $C$ is $r$-reachable.
\end{define}

\begin{remark}
	Given a particular subset $S \subset \V$, it can be verified in polynomial time whether $\D$ is strongly robust w.r.t. $S$ \cite{mitra2018byzantine}. {  On the other hand, determining the $r$-robustness of a digraph is NP-hard in general \cite{leblanc2013algorithms}, but can be computed using mixed integer linear programming \cite{usevitch2019determining, usevitch2019MILP}.}
\end{remark}

In this paper, we introduce the concept of \emph{strong $(T,t_0,r)$-robustness}, which is defined as follows:

\begin{define}
    Let $T,r \in \Z_{\geq 0}$ and let $t_0 \in \Z$. Let $\D[t] = (\V,\E[t])$ be a digraph with a time-varying edge set, and define $\D^T[t] = \bigcup_{\tau = 0}^T \D[t-\tau]$. Then $\D[t]$ is strongly $(T,t_0,r)$-robust with respect to a subset $S \subset \V$ if $\D^T[t]$ is strongly $r$-robust with respect to $S \subset \V$ for all $t \geq t_0+T$.
\end{define}

\begin{remark}
Strong $(T,t_0,r)$-robustness generalizes the notion of strong $r$-robustness to digraphs with a time-varying edge set. Note that the property of strong $r$-robustness in Definition \ref{def:strongr} is a particular case of strong $(T,t_0,r)$-robustness with $T = 0$.
\end{remark}

\begin{remark}
	In many practical networks it may be difficult to ensure that a digraph $\D[t]$ is strongly $r$-robust w.r.t. $S$ at every time step $t$. The time window $T$ relaxes this requirement by only requiring the union of $\D[t]$ over the last $T$ timesteps to be strongly $r$-robust w.r.t. $S$. Increasing $T$ allows for edges to be ``active" less often while still preserving the $(T,t_0,r)$-robustness of $\D[t]$.
\end{remark}

\section{Resilient Leader-Follower Consensus in Time-Varying Graphs}
\label{sec:results}

For our analysis of time-varying graphs, the following functions are defined (with $T' = \min(T, t-t_0)$ as per Algorithm \ref{alg:swmsr}):
\begin{align}
	\Mb[t] &= \max_{i \in \sfn, l \in \Le^\N,  \tau \in [0,T']} (x_i[t-\tau],x_l[t-\tau]) \nonumber \\
    \mb[t] &= \min_{i \in \sfn, l \in \Le^\N, \tau \in [0,T']} (x_i[t-\tau],x_l[t-\tau]) \nonumber \\
    V[t] &= \Mb[t] - \mb[t] \label{eq:minmaxv}
\end{align}

The following Lemma establishes that $\Mb[t]$ and $\mb[t]$ are nonincreasing and nondecreasing functions, respectively, on any time interval where $f_r[t]$ is constant.

\begin{lemma}
    \label{lem:swmsrint}
Let $\D[t] = (\V,\E[t])$ be a nonempty, nontrivial, simple digraph with $S_f$ nonempty. Let $F,\tau \in \Z_+$, $t_0,t_1,t_2 \in \Z$ with  $t_2 > t_1 \geq t_0$.
Suppose that $\A$ is an $F$-local set with respect to $t_0$,
%$\D[t]$ is strongly $(T,t_0,2F+1)$-robust w.r.t. the set $\Le$, 
and suppose that all normally behaving agents $i \in \sfn$ apply the SW-MSR algorithm with parameter $F$. If $f_r[t]$ is constant $\forall t \in [t_1,t_2)$, then all of the following statements hold $\forall t \in [t_1,t_2)$:
\begin{itemize}
        \item $x_i[t] \in [\mb[t_1],\Mb[t_1]]\ \forall i \in \sfn$
        \item $\Mb[t]$ and $\mb[t]$ are nonincreasing and nondecreasing, respectively.
    \end{itemize}
\end{lemma}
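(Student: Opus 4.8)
The plan is to prove the monotonicity statement (the second bullet) first, since the containment statement (the first bullet) then follows almost immediately. Because $x_i[t]$ appears as the $\tau=0$ term in the definitions of both $\Mb[t]$ and $\mb[t]$, every normal follower satisfies $\mb[t] \le x_i[t] \le \Mb[t]$; combining this with $\Mb[t] \le \Mb[t_1]$ and $\mb[t] \ge \mb[t_1]$ (the monotonicity) yields $x_i[t] \in [\mb[t_1],\Mb[t_1]]$ for all $t \in [t_1,t_2)$. I would therefore concentrate on establishing $\Mb[t+1] \le \Mb[t]$, the bound $\mb[t+1] \ge \mb[t]$ being entirely symmetric.

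To compare $\Mb[t+1]$ with $\Mb[t]$ I would first do the sliding-window bookkeeping. Writing $T' = \min(T,t-t_0)$ and $T'' = \min(T,t+1-t_0)$, one checks that $T'' \le T'+1$, so the times $\{t+1-\tau : \tau \in [1,T'']\}$ contributing ``old'' states to $\Mb[t+1]$ are contained in the window $[t-T',t]$ defining $\Mb[t]$. Hence the maximum over these old states is at most $\Mb[t]$, and it only remains to bound the genuinely new states entering at time $t+1$, namely $x_i[t+1]$ for $i \in \sfn$ and $x_l[t+1]$ for $l \in \Le^\N$.

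The core of the argument is the bound $x_i[t+1] \le \Mb[t]$ for each normal follower, which is the standard MSR filtering argument adapted to the window. By the definition of $\Mb[t]$, every value a normal agent (follower, leader, or $i$ itself) contributes to $\Omega_i[t]$ — a value $x_j[\tau_{ij}[t]]$ with $\tau_{ij}[t] \in [t-T',t]$ — is at most $\Mb[t]$, using that normal agents transmit their true states. Consequently any value in $\Omega_i[t]$ strictly exceeding $\Mb[t]$ must originate from a misbehaving agent, and $F$-locality guarantees at most $F$ such values are present. A short order-statistics argument then shows step 2 of Algorithm \ref{alg:swmsr} discards every value above $\Mb[t]$: if fewer than $F$ values exceed $x_i[t]$ they are all removed (and any value above $\Mb[t] \ge x_i[t]$ is among them), while if at least $F$ do, a value above $\Mb[t]$ surviving the removal of the $F$ largest would force more than $F$ values above $\Mb[t]$, contradicting $F$-locality. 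Since $x_i[t+1]$ is a convex combination (weights positive and summing to one) of surviving values, all of which lie in $(-\infty,\Mb[t]]$, we obtain $x_i[t+1] \le \Mb[t]$; the surviving set is nonempty because $x_i^i[\tau_{ii}[t]] = x_i[t]$ is never removed. For the normal leaders, $x_l[t+1] = f_r[t]$ equals the constant reference on $[t_1,t_2)$ and hence coincides with $x_l[t]$, a state already counted in $\Mb[t]$, giving $x_l[t+1] \le \Mb[t]$. Together with the old-state bound this yields $\Mb[t+1] \le \Mb[t]$.

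The step I expect to be the main obstacle is the filtering bound, and within it the bookkeeping of $F$-locality across the sliding window: one must argue carefully that at most $F$ of the (one-per-neighbor) values in $\Omega_i[t]$ come from misbehaving agents before the order-statistics argument applies. A secondary point requiring care is the leader contribution at the left endpoint $t=t_1$: the identity $x_l[t+1]=x_l[t]$ relies on the normal leaders already holding the constant reference value throughout the window, which I would make explicit, noting that it holds automatically once $f_r$ has been constant for one step and in particular in the static-reference setting that is this paper's primary concern.
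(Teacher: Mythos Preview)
Your proposal is correct and follows essentially the same filtering-plus-convex-combination argument as the paper. The only organizational difference is that you establish monotonicity of $\Mb[t],\mb[t]$ first and read off the containment $x_i[t]\in[\mb[t_1],\Mb[t_1]]$ from it, whereas the paper proves the containment at $t_1+1$ first and infers monotonicity from that; the paper also invokes $|\A|\le F$ directly where you (more carefully) flag the need to check that $F$-locality bounds the adversarial entries of $\Omega_i[t]$ across the window.
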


\begin{proof}
    First, observe that $x_l[t] = f_r[t]$ $\forall l \in \Le^\N$, $\forall t \geq t_0$ by \eqref{eq:leader}. Since $f_r[t]$ is constant $\forall t \in [t_1,t_2)$, by \eqref{eq:minmaxv} we have $x_l[t] \in [\mb[t_1],\Mb[t_1]]$ $\forall l \in \Le^\N$, $\forall t \in [t_1,t_2)$. Next, consider any $i \in \sfn$. By definition of $\Mb[t]$ and $\mb[t]$, $\forall j \in \J_i^T[t_1] \backslash \A,\ x_j^i[\tau_{ij}[t_1]] \in [\mb[t_1],\Mb[t_1]]$ where $\tau_{ij}[t]$ is defined by \eqref{eq:tau_Omega}. Now consider any agent $k \in \A$. If we have $x_k^i[\tau_{ik}[t_1]] > \Mb[t_1] \geq x_j(\tau_{ij}[t_1])\ \forall j \in \V \backslash \A$, the fact that $|\A| \leq F$ implies any value $x_k^i[\tau_{ik}[t_1]]$ satisfying this condition is one of the $F$ highest values in $\Omega_i[t_1]$ and will be filtered out as per the SW-MSR Algorithm (Algorithm \ref{alg:swmsr}). Similarly, if $x_k^i[\tau_{ik}[t_1]] < \mb[t_1] \leq x_j(\tau_{ij}[t_1])\ \forall j \in \V \backslash \A$, then $x_k^i[\tau_{ik}[t_1]]$ is one of the $F$ lowest values in $\Omega_i[t_1]$ and will be filtered out. Therefore all state values in $\J_i^T[t_1] \backslash \Rc_i[t_1]$ fall in the interval $[\mb[t_1],\Mb[t_1]]\ \forall i \in \sfn$. Since the values of $w_{ij}[t_1]$ imply a convex combination of values in the set $\J_i[^T[t_1] \backslash \Rc_i[t_1]$, $x_i[t_1+1] \in [\mb[t_1],\Mb[t_1]]$. Further, since by definition of $\mb$ and $\Mb$ we have $x_i[t-\tau] \in [\mb[t_1],\Mb[t_1]]$ $\forall i \in \sfn,\ \forall l \in \Le^\N$, $\forall \tau \in [0,T']$ where $T' = \min(T,t-t_0)$, it holds that $x_i[t_1+1-\tau] \in [\mb[t_1],\Mb[t_1]]\ \forall \tau \in [0,T'],\ \forall i \in \sfn$, $\forall l \in \Le$. 
These arguments imply $\Mb[t_1+1] \leq \Mb[t_1]$. Similar arguments can be used to show $\mb[t_1+1] \geq \mb[t_1]$.

Now by induction assume $\Mb[t_1+p] \leq \Mb[t_1 + p-1]$ and $\mb[t_1+p] \geq \mb[t_1 + p-1]$, for all $p \in \Z_+$ such that $p \geq 1$, $t_1 + p < t_2-1$. By \eqref{eq:minmaxv}, $x_i[t] \in [\mb[t_1 + p],\Mb[t_1 + p]]$ $\forall i \in \sfn$, $\forall t \in [t_1 + p -T,t_1+p]$. In addition, $f_r[t]$ being constant on $[t_1,t_2)$ implies $x_l[t] \in [\mb[t_1 + p],\Mb[t_1 + p]]$ $\forall l \in \Le^\N$. Therefore $x_j^i[\tau_{ij}[t_1+p]] \in [\mb[t_1],\Mb[t_1]]$ $\forall j \in \J_i^T[t_1+p] \backslash \A$. Since $|\A| \leq F$, it can be shown by prior arguments that $x_j^i[\tau_{ij}[t_1+p]]$ for all $j \in \J_i^T[t_1+p] \backslash \Rc_i[t_1+p]$ will lie in the interval $[\mb[t_1 + p],\Mb[t_1 + p]]$ $\forall i \in \sfn$. Therefore all $i \in \sfn$ will update their states with a convex combination of values in $[\mb[t_1 + p],\Mb[t_1 +p]]$, implying $\mb[t_1 + p + 1] \geq \mb[t_1 + p]$ and $\Mb[t_1 + p + 1] \leq \Mb[t_1 + p]$.
\end{proof}

The next theorem demonstrates that the error between the normal nodes and normally behaving leaders decreases exponentially on any time interval $t \in [t_1,t_2)$ where $f_r[t]$ is constant and $t_2-t_1$ is sufficiently large.

\begin{theorem}
    \label{thm:swmsr}
    Let $\D[t] = (\V,\E[t])$ be a nonempty, nontrivial, simple digraph.
    Let $\Le, S_f, S_f^\N, \A$ be defined as per Definitions \ref{def:LsetSfset} and \ref{def:misbehaving}.
    Let $F \in \Z_+$, $t_0,t_1,t_2 \in \Z$ with $t_2 > t_1 \geq t_0 + T$, and let $V[t]$ be defined as in \eqref{eq:minmaxv}.
Suppose that $S_f$ is nonempty, $\A$ is an $F$-local set with respect to $t_0$, $\D[t]$ is strongly $(T,t_0,2F+1)$-robust w.r.t. the set $\Le$, and all normally behaving agents $i \in \sfn$ apply the SW-MSR algorithm with parameter $F$. If $f_r[t]$ is constant $\forall t \in [t_1-T,t_2)$ and $t_2 > t_1 + ( |\sfn| +1) \sigma T$ for some $\sigma \in \Z_+$, then
    \begin{align}
    	V[t_1+( |\sfn| +1) \sigma T] &\leq (1-\alpha^{(|\sfn|+1)T})^\sigma V[t_1 + T], \nonumber
    \end{align}
    where $0 < \alpha < 1$ is defined in Algorithm \ref{alg:swmsr}.  Furthermore, if $t_2 = \infty$ then
    \begin{align*}
    \limfy{t} V[t] = \limfy{t} \max_{i \in \sfn,\ l \in \Le^\N} |x_i[t] - x_l[t]| = 0.
    \end{align*}
    
\end{theorem}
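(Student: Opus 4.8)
The plan is to reduce the claim to a single ``super-step'' contraction of $V[t]$ and then iterate it $\sigma$ times. First I would fix the constant value $L := f_r[t]$ on $[t_1-T,t_2)$, so that every normally behaving leader satisfies $x_l[t] = L$ there and hence $\mb[t] \le L \le \Mb[t]$ for all $t \in [t_1,t_2)$; Lemma \ref{lem:swmsrint} then gives that $\Mb[t]$ is nonincreasing and $\mb[t]$ nondecreasing, so $V[t] = \Mb[t]-\mb[t]$ is nonnegative and nonincreasing on this interval. Since $x_i[t],x_l[t] \in [\mb[t],\Mb[t]]$ we also have $\max_{i \in \sfn,\, l \in \Le^\N} |x_i[t]-x_l[t]| \le V[t]$, so once $V[t] \to 0$ is established the final $\max$-statement follows by squeezing. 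The heart of the proof is therefore the one-super-step bound
\begin{align}
	V[s + (|\sfn|+1)T] \le \pth{1 - \alpha^{(|\sfn|+1)T}}\, V[s+T], \nonumber
\end{align}
valid whenever $s \ge t_1$ and $s + (|\sfn|+1)T < t_2$; applying it with $s = t_1 + (|\sfn|+1)(k-1)T$ for $k = 1,\ldots,\sigma$ and absorbing the $T$-offset on the right-hand side using monotonicity of $V$ telescopes into the displayed inequality.

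To prove the one-super-step bound I would run a one-sided ``absorption'' argument toward the source value $L$, treating the upper side (the lower side being symmetric). Set $M = \Mb[s+T]$; by Lemma \ref{lem:swmsrint} every kept value in any normal agent's filtered set stays in $[\mb[\cdot],M]$, so $M$ is a permanent ceiling. I would track the set $C \subseteq \sfn$ of normal followers whose value still exceeds a threshold of the form $M - \alpha^{p}(M-L)$ after $p$ elapsed steps. While $C$ is nonempty it is a nonempty subset of $\V \backslash \Le$, so strong $(T,t_0,2F+1)$-robustness w.r.t.\ $\Le$ forces some $i \in C$ to have at least $2F+1$ in-neighbors in $\D^T[\cdot]$ lying outside $C$; by $F$-locality at least $F+1$ of these are normal, and each such normal out-of-$C$ neighbor is either a leader at value $L$ or an already-absorbed follower, hence satisfies the current threshold. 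Because the SW-MSR filter discards at most $F$ of the smallest values, at least one of these low normal neighbors survives into $i$'s kept set with weight $\ge \alpha$, while every other kept value is $\le M$; combined with the self-weight $w_{ii} \ge \alpha$ this pulls $x_i$ strictly below the next (factor-$\alpha$ tighter) threshold. Since $|\sfn|$ is finite, after the absorptions every normal follower lies below $M - \alpha^{(|\sfn|+1)T}(M-L)$, i.e.\ $\Mb[s+(|\sfn|+1)T] \le M - \alpha^{(|\sfn|+1)T}(M-L)$; the symmetric lower bound gives $\mb[s+(|\sfn|+1)T] \ge \mb[s+T] + \alpha^{(|\sfn|+1)T}(L-\mb[s+T])$, and subtracting yields exactly the factor $1-\alpha^{(|\sfn|+1)T}$ because $(M-L)+(L-\mb[s+T]) = V[s+T]$.

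The main obstacle is the interaction between the sliding window and the per-step threshold decay. Because agent $i$ uses the most recent value $x_j^i[\tau_{ij}[t]]$ of a neighbor $j$, which may be up to $T$ steps old, I must guarantee that an absorbed neighbor remains below the relevant threshold throughout the entire length-$T$ window before it can be used as a reliable seed for the next hop. This is exactly where the self-loop weight $w_{ii} \ge \alpha$ is essential: it shows that once $x_j$ drops below $M - \epsilon$ it stays below $M - \alpha\epsilon$ at the next step, so the threshold may be taken to decay geometrically by $\alpha$ per time step while remaining a valid bound across windows. Carefully charging one window of length $T$ to each of the $|\sfn|$ absorption stages, plus an initial window to populate neighbors' buffers, is what produces the exponent $(|\sfn|+1)T$ and hence the stated contraction constant. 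Finally, for $t_2 = \infty$ the factor $1-\alpha^{(|\sfn|+1)T} \in (0,1)$ forces $V[t_1+(|\sfn|+1)\sigma T] \to 0$ as $\sigma \to \infty$; monotonicity and nonnegativity of $V[t]$ upgrade this subsequential limit to $\limfy{t} V[t] = 0$, and the squeeze above gives $\limfy{t}\max_{i,l}|x_i[t]-x_l[t]| = 0$.
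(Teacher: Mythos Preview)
Your proposal is correct and follows essentially the same route as the paper's proof: both arguments track a shrinking ``unabsorbed'' set of normal followers, use strong $(T,t_0,2F+1)$-robustness w.r.t.\ $\Le$ together with $F$-locality to guarantee that at least one normal out-of-set neighbor survives the SW-MSR filter each window, and use the self-weight $w_{ii}\ge\alpha$ to propagate the geometric threshold decay across the sliding window, yielding the exponent $(|\sfn|+1)T$. The only cosmetic difference is that the paper bounds the upper and lower deviations simultaneously via a single set $S_X = X_M \cup X_m$ (with $\epso = \Mb[t_1]-f_r[t_1]$ and $\epsu = f_r[t_1]-\mb[t_1]$) and shows $|S_X|$ strictly decreases every $T$ steps, whereas you run the upper and lower sides separately and appeal to symmetry; the counting and the final subtraction $(\Mb-L)+(L-\mb)=V$ are identical.
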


\begin{proof}
Consider the case where $f_r[t]$ is constant for $t \in [t_1-T,t_2)$ and $t_2 < \infty$. This implies $x_l[t] = f_r[t]$ is constant $\forall l \in \Le^\N$, $\forall t \in [t_1-T,t_2)$.
We define
    \begin{align*}
        X_m(t,t',\epsu) &= \{i \in \N : x_i[t'-\tau] < \mb[t] + \epsu \nonumber\\ &\text{ for some } 0 \leq \tau \leq T\},  \nonumber\\
        X_M(t,t',\epso) &= \{i \in \N : x_i[t'-\tau] > \Mb[t] - \epso \nonumber\\ &\text{ for some } 0 \leq \tau \leq T\} , \nonumber\\
%        \end{align*}
%        \begin{align*}
        S_X(t,t',\epsu,\epso) &= X_m(t,t',\epsu) \cup X_M(t,t',\epso), \nonumber \\
        \overline{S}_X(t,t',\epsu,\epso) &= \V \backslash S_X(t,t',\epsu,\epso).
    \end{align*}
We prove the result by first showing that $|S_X(t,t',\epsu,\epso)|$ decreases over an appropriate sequence of $t'$ and with an appropriate choice of $\epsu,\epso$.
Let $\epsu = f_r[t_1] - \mb[t_1]$ and $\epso = \Mb[t_1] - f_r[t_1]$.
$\D[t]$ is strongly $(T,t_0,2F+1)$-robust w.r.t $\Le$, implying $\D^T[t]$ is strongly $(2F+1)$-robust w.r.t $\Le$ $\forall t \geq t_0 + T$. $\D^T[t]$ being strongly $(2F+1)$-robust with respect to $\Le$ implies there exists a nonempty $S_1 \subseteq \sfn \subset \V \backslash \Le$ such that $\forall i_1 \in S_1$, $|\J_{i_1}^T[t_1]\backslash \sfn| \geq 2F+1$. Since $\A$ is $F$-local and $\V \backslash \sfn = \Le \cup \A$, this implies $|\J_{i_1}^T[t_1] \cap \Le^\N| \geq F+1$. This implies by the SW-MSR Algorithm, $\J_{i_1}^T[t_1] \backslash \Rc_{i_1}[t_1]$ contains at least one normally behaving leader $l \in \Le^\N$ with $x_l^i[\tau_{i_1 l}[t_1]] = f_r[t_1]$. This can be seen by noting that $x_l[t] = f_r[t]$ $\forall l \in \Le^\N$, $\forall t \in [t_1-T,t_2)$.
 Using this fact, lower bounds on $x_{i_1}[t]$ for all $i_1 \in S_1$ and $t \in [t_1+1,t_1+T]$ can be established as follows: recall that the weights $w_{ij}$ are lower bounded by $\alpha > 0$. By Lemma \ref{lem:swmsrint}, $x_j^{i_1}[t] \in [\mb[t_1],\Mb[t_1]]$ $\forall j \in \J_{i_1}^T[t] \backslash \Rc_{i_1}[t]$, $\forall t \in [t_1-T,t_2)$. Observe that
    \begin{align}
    	x_{i_1}[t_1+1] &= \sum_{j \in \J_{i_1}^T[t] \backslash \Rc_{i_1}[t]} w_{{i_1}j}[t] x_j^{i_1}[\tau_{{i_1}j}[t]], \\ 
        &\geq \alpha f_r[t_1] + (1-\alpha)\mb[t_1]. \label{eq:absmin}
    \end{align}
    Since there exists at least one normally behaving leader in $\J_{i_1}^T[t_1] \backslash \Rc_{i_1}[t_1]$, \eqref{eq:absmin} represents the minimum possible value for $x_{i_1}[t_1 +1]$. Extending these bounds to time $t_1 + T$ yields
    \begin{align}
    	x_{i_1}[t_1 + 2] &\geq \alpha x_{i_1}[t_1 + 1] +  (1-\alpha)\mb[t_1], \nonumber \\
    	        &\geq \alpha^2 f_r[t_1] + (1+\alpha)(1-\alpha)\mb[t_1], \nonumber \\
    	                x_{i_1}[t_1 + 3] 
%        &\geq \alpha x_{i_1}[t_1 + 2] +  (1-\alpha)\mb[t_1], \nonumber \\
        &\geq \alpha^3 f_r[t_1] + (1+\alpha + \alpha^2)(1-\alpha)\mb[t_1], \nonumber \\
        \vdots \qquad & \qquad \qquad \qquad \vdots \nonumber \\
    %\end{align*}
    %\begin{align}
        x_{i_1}[t_1 + k] &\geq \alpha^k f_r[t_1] + \bigg(\sum_{j=0}^{k-1} \alpha^j\bigg)(1-\alpha)\mb[t_1], \nonumber \\
        &\geq \alpha^k f_r[t_1] + (1-\alpha^k)\mb[t_1], \nonumber \\
        &\geq \mb[t_1] + \alpha^k\epsu. \label{eq:low1}
    \end{align}
    This holds for $0 < k \leq T$. 
	Using similar arguments, an upper bound on $x_{i_1}[t_1 + k]$ can be established as follows:
    \begin{align}
        x_{i_1}[t_1 +1] &\leq \alpha f_r[t_1] + (1-\alpha)\Mb[t_1], \nonumber\\
        \vdots \qquad & \qquad \qquad \qquad \vdots \nonumber \\
        x_{i_1}[t_1 +k] &\leq \alpha^k f_r[t_1] + (1-\alpha^k)\Mb[t_1],  \nonumber \\
        &\leq \Mb[t_1] - \alpha^k\epso, \label{eq:high1}
    \end{align}
    for $0 < k \leq T$. 
    %We define $\epso = \Mb[t_1] - f_r[t_1]$.  
    Therefore $x_{i_1}[t_1 + T] \in [\mb[t_1] + \alpha^T\epsu, \Mb[t_1] - \alpha^T \epso]$ for all $i_1 \in S_1$.

We show next that $|S_X(t_1,t_1+2T,\alpha^{2T}\epsu, \alpha^{2T} \epso)| < |\sfn|$. Define $C_2$ as the set of all $i_2 \in \sfn$ such that $x_{i_2}[t_1 + T] \in [\mb[t_1] + \alpha^T\epsu, \Mb[t_1] - \alpha^T \epso]$. Since $S_1 \subseteq C_2$ by \eqref{eq:low1} and \eqref{eq:high1}, $C_2$ is therefore nonempty. Since each agent in $\sfn$ always uses its own state in \eqref{eq:follower} as per the SW-MSR algorithm, lower bounds on the state of each $i_2 \in C_2$ can be established as:
\begin{align}
	x_{i_2}[t_1+T+1] &\geq \alpha x_{i_2}[t_1+T] + (1-\alpha)\mb[t_1], \nonumber \\
	&\geq \alpha(\mb[t_1] + \alpha^T\epsu) + (1-\alpha)\mb[t_1], \nonumber \\
	&\geq \alpha^{T+1}\epsu + \mb[t_1] \nonumber \\
	x_{i_2}[t_1+T+2] &\geq \mb[t_1]+ \alpha^{T+2}\epsu , \nonumber \\
	\vdots \qquad & \qquad \qquad \qquad \vdots \nonumber\\
	x_{i_2}[t_1+T+k] &\geq  \mb[t_1] + \alpha^{T+k}\epsu,
\end{align}
which holds for $0 < k \leq T$. Similarly, the following upper bounds can be established:
\begin{align}
x_{i_2}[t_1+T+1] &\leq \alpha x_{i_2}[t_1+T] + (1-\alpha)\Mb[t_1], \nonumber \\
&\leq \alpha(\Mb[t_1] - \alpha^T\epso) + (1-\alpha)\Mb[t_1], \nonumber \\
&\leq \Mb[t_1] + \alpha^{T+1}\epso, \nonumber \\
x_{i_2}[t_1+T+2] &\leq  \Mb[t_1] + \alpha^{T+2}\epso, \nonumber \\
\vdots \qquad & \qquad \qquad \qquad \vdots \nonumber\\
x_{i_2}[t_1+T+k] &\leq \Mb[t_1] + \alpha^{T+k}\epso, \label{eq:lowbound2}
\end{align}
which holds for $0 < k \leq T$. These arguments imply that for all $i_2 \in C_2$, $i_2 \notin S_X(t_1,t_1+2T,\alpha^{2T}\epsu,\alpha^{2T}\epso)$. Therefore $|S_X(t_1,t_1+2T,\alpha^{2T}\epsu,\alpha^{2T}\epso)| < |\sfn|$.

We next show that $|S_X(t_1,t_1 + 3T,\alpha^{3T}\epsu,\alpha^{3T}\epso)| < |S_X(t_1,t_1 + 2T,\alpha^{2T}\epsu,\alpha^{2T}\epso)|$.
Since $\D^T[t]$ is strongly $(T,t_0,2F+1)$-robust, there exists a nonempty $S_3 \subseteq S_X(t_1,t_1+2T,\alpha^{2T}\epsu,\alpha^{2T}\epso)$ such that for all $i_3 \in S_3$, $|\J_{i_3}^T[t_1+2T] \cap \overline{S}_X(t_1,t_1+2T,\alpha^{2T}\epsu,\alpha^{2T}\epso)| \geq 2F+1$. Since $\A$ is an $F$-local set, $\J_{i_3}^T[t_1+2T] \cap \overline{S}_X(t_1,t_1+2T,\alpha^{2T}\epsu,\alpha^{2T}\epso)$ includes at least $F+1$ normally behaving agents from $\N$ $\forall i_3 \in S_3$. Observe that by the definition of $S_X(t_1,t_1+2T,\alpha^{2T}\epsu,\alpha^{2T}\epso)$ the state of each $i_3 \in S_3$ satisfies either $x_{i_3}[t_1+2T] < x_{j}^{i_3}[\tau_{i_3 j}[t_1+2T]]$ or $x_{i_3}[t_1+2T] > x_{j}^{i_3}[\tau_{i_3 j}[t_1+2T]]$ for all $j \in \N \cap \overline{S}_X(t_1,t_1+2T,\alpha^{2T}\epsu,\alpha^{2T}\epso)$. Therefore $i_3$ will incorporate at least one in-neighbor's state from the interval $[\mb[t_1] + \alpha^{2T}\epsu, \Mb[t_1] - \alpha^{2T}\epso]$ in its state update, yielding the following bounds for all $i_3 \in S_3$:
\begin{align}
	x_{i_3}[t_1+2T+1] &\geq \alpha(\mb[t_1] + \alpha^{2T}\epsu) + (1-\alpha)\mb[t_1] \nonumber \\
	&\geq \mb[t_1] + \alpha^{2T+1}\epsu \nonumber \\
	x_{i_3}[t_1+2T+2] &\geq \mb[t_1] + \alpha^{2T+2}\epsu	\nonumber \\
	\vdots \qquad & \qquad \qquad \qquad \vdots \nonumber\\
	x_{i_3}[t_1+2T+k] &\geq \mb[t_1] + \alpha^{2T+k}\epsu, \label{eq:lowbound2}
\end{align}
for all $0 < k \leq T$. Similarly,
\begin{align}
	x_{i_3}[t_1+2T+1] &\leq \alpha(\Mb[t_1] + \alpha^{2T}\epso) + (1-\alpha)\Mb[t_1] \nonumber \\
	&\leq \Mb[t_1] + \alpha^{2T+1}\epsu \nonumber \\
	x_{i_3}[t_1+2T+2] &\leq \Mb[t_1] + \alpha^{2T+2}\epso \nonumber \\
	\vdots \qquad & \qquad \qquad \qquad \vdots \nonumber\\
	x_{i_3}[t_1+2T+k] &\leq \Mb[t_1] + \alpha^{2T+k}\epso \label{eq:upbound2}
\end{align}
for all $0 < k \leq T$. 
This implies that $i_3 \notin S_X(t_1,t_1 + 3T,\alpha^{3T}\epsu,\alpha^{3T}\epso)$ $\forall i_3 \in S_3$. 
Furthermore, we define $C_3$ as the set of all $j_3 \in \sfn$ such that $x_{j_3}[t_1 + 2T] \in [\mb[t_1] + \alpha^{2T}\epsu, \Mb[t_1] -\alpha^{2T}\epso ]$. By this definition, $C_2 \subseteq C_3$.
Note that the bounds in equations \eqref{eq:lowbound2} and \eqref{eq:upbound2} also apply to all agents $j_3 \in C_3$ since $x_{j_3}[t_1+2T] \in [\mb[t_1] + \alpha^{2T}\epsu, \Mb[t_1] - \alpha^{2T}\epso]$ $\forall j_3 \in C_3$, and each $j_3$ does not filter out its own state. Therefore $j_3 \notin S_X(t_1,t_1 + 3T,\alpha^{3T}\epsu,\alpha^{3T}\epso)$ $\forall j_3 \in C_3$, and therefore $|S_X(t_1,t_1 + 3T,\alpha^{3T}\epsu, \alpha^{3T}\epso)| < |S_X(t_1,t_1 + 2T,\alpha^{2T}\epsu,\alpha^{2T}\epso)|$.

This logic can be continued iteratively to show that $|S_X(t_1,t_1 + pT,\alpha^{pT}\epsu, \alpha^{pT}\epso)| < |S_X(t_1,t_1 + (p-1)T,\alpha^{(p-1)T}\epsu,\alpha^{(p-1)T}\epso)|$ for all $p \geq 2$, $p \in \Z$ such that $t_1 + pT < t_2$.
This can be done by defining
{\medmuskip=0mu
\thinmuskip=0mu
\thickmuskip=0mu
\begin{align}C_{p} \hspace{.5em} = 
\{i_{p} \in \sfn : x_{i_{p}}[t_1 + (p-1)T] \in [\mb[t_1] + \alpha^{(p-1)T} \epsu,& \nonumber \\
\Mb[t_1] + \alpha^{(p-1)T}\epso]\}&, \nonumber
\end{align}
}
which satisfies $C_{p-1} \subseteq C_{p}$,
and considering each $S_X(t_1,t_1 + (p-1)T,\alpha^{(p-1)T}\epsu,\alpha^{(p-1)T}\epso)$ for $p \geq 3$. Since $\D^T[t]$ is $(T,t_0,2F+1)$-robust, if $S_X(t_1,t_1 + (p-1)T,\alpha^{(p-1)T}\epsu,\alpha^{(p-1)T}\epso)$ is nonempty at time $t_1 + (p-1)T$ then there exists a nonempty $S_p \subseteq S_X(t_1,t_1 + (p-1)T,\alpha^{(p-1)T}\epsu,\alpha^{(p-1)T}\epso)$ such that $\forall i_p \in S_p$, $|\J_{i_p}^T[t_1 + (p-1)] \cap \overline{S}_X(t_1,t_1 + (p-1)T,\alpha^{(p-1)T}\epsu,\alpha^{(p-1)T}\epso)| \geq 2F+1$. Using prior arguments, it can then be shown that $x_{i_p}[t_1 + pT] \in [\mb[t_1] + \alpha^p \epsu, \Mb[t_1] - \alpha^p \epso]$. This implies that $i_p \notin S_X(t_1,t_1 + pT,\alpha^{pT}\epsu,\alpha^{pT}\epso)$ $\forall i_p \in S_p$. Similarly, by using prior arguments it also holds that $x_{j_{p}}[t_1 + pT] \in [\mb[t_1] + \alpha^p \epsu, \Mb[t_1] - \alpha^p \epso]$ $\forall j_p \in C_p$, and therefore $j_{p} \notin S_X(t_1,t_1 + pT,\alpha^{pT}\epsu, \alpha^{pT}\epso )$ $\forall j_{p} \in C_p$. This implies that $|S_X(t_1,t_1 + pT,\alpha^{pT}\epsu, \alpha^{pT}\epso )| < |S_X(t_1,t_1 + (p-1)T,\alpha^{(p-1)T}\epsu, \alpha^{(p-1)T}\epso)|$ for all $p \geq 2$, $p \in \Z$ such that $t_1 + pT < t_2$.

Since $\sfn \subset \V$ is finite, there exists a $p' > 1$, $p' \in \Z_+$ such that $S_X(t_1,t_1 + (p'+1)T,\alpha^{(p'+1)T}\epsu,\alpha^{(p'+1)T}\epso) = \emptyset$. 
 This implies that for all $i \in \sfn$,
\begin{align}
	 x_i[t_1 + (p'+1)T] &\geq \mb[t_1] + \alpha^{(p'+1)T} \epsu \nonumber \\
	 x_i[t_1 + (p'+1)T] &\leq \Mb[t_1] + \alpha^{(p'+1)T} \epso
\end{align}
Considering $V[t_1 + (p'+1)T]$, we have
%{\medmuskip=0mu
%\thinmuskip=0mu
%\thickmuskip=0mu
\begin{align}
	&V[t_1 + (p'+1)T] = \nonumber \\
	&\hspace{3em}\Mb[t_1 + (p'+1)T] - \mb[t_1 + (p'+1)T] \nonumber \\
	&\leq \Mb[t_1] - \alpha^{(p'+1)T} \epso - (\mb[t_1] + \alpha^{(p'+1)T} \epsu) \nonumber \\
	&\leq V[t_1] - \alpha^{(p'+1)T}(\epsu + \epso)
\end{align}
%}
Recall that $\epsu = f_r[t_1] - \mb[t_1]$ and $\epso = \Mb[t_1] - f_r[t_1]$. This implies that $\epsu + \epso = \Mb[t_1] - \mb[t_1] = V[t_1]$, implying
\begin{align}
\label{eq:someeq}
	&V[t_1 + (p'+1)T] \leq \nonumber \\
	&\hspace{3em} V[t_1] - \alpha^{(p'+1)T}V[t_1] = (1 - \alpha^{(p'+1)T})V[t_1]
\end{align}
Recalling that $|S_X(t_1,t_1+2T,\alpha^{2T}\epsu, \alpha^{2T} \epso)|$ $< |\sfn|$ at time $t_1 + 2T$, and that $|S_X(t_1,t_1 + pT,\alpha^{pT}\epsu, \alpha^{pT}\epso)|$ $< |S_X(t_1,t_1 + (p-1)T,\alpha^{(p-1)T}\epsu, \alpha^{(p-1)T}\epso)|$ for all $p \geq 3$,
it follows that 
$p' \leq |\sfn|$
since $S_X(t_1,t_1 + (p'+1)T,\alpha^{(p'+1)T}\epsu,\alpha^{(p'+1)T}\epso) = \emptyset$ after no more than $(|\sfn| + 1)T$ time steps.
Therefore we have $V[t_1 + (|\sfn|+1)T] \leq (1 - \alpha^{(|\sfn|+1)T})V[t_1]$ by substituting $p' = |\sfn|$ into \eqref{eq:someeq}.
The above analysis can be repeated to show
{\medmuskip=0mu
%\thinmuskip=0mu
%\thickmuskip=0mu
\begin{align}
V[t_1 + (|\sfn|+1)\sigma T] \leq (1 - \alpha^{(|\sfn|+1)T})V[t_1 + (\sigma-1)T] \nonumber
\end{align}
}
for $\sigma \geq 1$, $\sigma \in \Z$ such that $t_1 + (|\sfn|+1)\sigma T < t_2$. This yields the result $V[t_1+T + ( |\sfn| +1) \sigma T] \leq (1-\alpha^{(|\sfn|+1)T})^\sigma V[t_1 + T]$ when $t_2 < \infty$. 

If $t_2 = \infty$, then $\limfy{t} V[t] = \limfy{\sigma} V[t_1+T + ( |\sfn| +1) \sigma T] \leq (1-\alpha^{(|\sfn|+1)T})^\sigma V[t_1 + T]$. Note that $\alpha < 1$ implies $(1-\alpha^{(|\sfn|+1)T}) < 1$, and therefore the limit converges to zero. By \eqref{eq:minmaxv}, $\limfy{t} V[t] = 0$ implies $\limfy{t} \max_{i \in \sfn,\ l \in \Le^\N} |x_i[t] - x_l[t]| = 0$.
\end{proof}

\begin{remark}
Although the proof of Theorem \ref{thm:swmsr} follows a similar line of reasoning as the excellent results in \cite{saldana2017resilient}, Theorem \ref{thm:swmsr} contains two significant theoretical differences. First, Theorem \ref{thm:swmsr} considers the more general \emph{Byzantine} adversarial model \cite{leblanc2013resilient}, whereas the results in \cite{saldana2017resilient} consider only malicious adversaries.\footnote{In essence, malicious adversaries may update their state arbitrarily, but will send the same state information to all out-neighbors. Byzantine adversaries may update their state arbitrarily and send different information to different out-neighbors.} Second, Theorem \ref{thm:swmsr} considers consensus of the followers to a specific reference value propagated by the set of normally behaving leader agents which may lie outside the convex hull of initial agents' states. The analysis in \cite{saldana2017resilient} considers leaderless consensus to some unknown value in the convex hull of the initial normal agents' states.
\end{remark}

\section{Adversarial Implications}
\label{sec:adv}

We next discuss the adversarial implications of Theorem \ref{thm:swmsr}. In most \emph{leaderless} resilient consensus settings considered in prior work (e.g. \cite{saldana2017resilient}), the networks consist only of normally behaving agents seeking a common consensus value, and adversarial agents behaving arbitrarily. 
Often, these results guarantee resilient consensus if the adversary model is \emph{at most} $F$-local.
However, these results for leaderless resilient consensus raise the following critical question: \emph{What happens if the adversary model is NOT $F$-local?}
To the authors' best knowledge, little (if any) analysis has focused on the precise effects of the $F$-local assumption being violated in these scenarios. From a practical standpoint it is difficult to provide absolute guarantees that $\A$ will always be strictly $F$-local in any real-world application of resilient algorithms. It is therefore critical to understand the consequences which will occur if the $F$-local assumption does not hold.

Theorem \ref{thm:swmsr} can be used to show one possible catastrophic outcome if the $F$-local assumption is violated in a \emph{leaderless} network. More specifically, Theorem 1 can be used to demonstrate that for a leaderless network applying the SW-MSR algorithm, if there exists a colluding set of adversarial agents $\A$ and if the network is strongly $(T,t_0,2F+1)$-robust with respect to $\A$, then the adversarial agents can drive the states of \emph{all} normal agents \emph{to any arbitrary value}. This result is presented more precisely in the following corollary:
\begin{corollary}
 Let $\D[t] = (\V,\E[t])$ be a nonempty, nontrivial, simple digraph with $\Le = \emptyset$. Let $F \in \Z_+$, $t_0,t_1,t_2 \in \Z$ with $t_2 > t_1 \geq t_0 + T$.
Suppose that $\D[t]$ is strongly $(T,t_0,2F+1)$-robust w.r.t. a set of misbehaving agents $\A$ and all normally behaving agents $i \in \V \backslash \A$ apply the SW-MSR algorithm with parameter $F$. If all agents $j \in \A$ send a constant, common value $x_j^i[t]$ to all of their respective out-neighbors $i \in \V_j^{out}$ for all $t \in [t_1-T,t_2)$, and if $t_2 > t_1 + ( |\V \backslash \A| +1) \sigma T$ for some $\sigma \in \Z_+$, then
the error between the normally behaving agents' states and the adversaries' common state $x_j^i[t]$ is exponentially decreasing for $t \in [t_1-T,t_2)$. Furthermore, if $t_2 = \infty$ then $\limfy{t} \max_{i \in \V \backslash \A} |x_i[t] - x_j^i[t]| = 0$.
\end{corollary}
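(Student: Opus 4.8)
The plan is to recognize that the corollary is essentially a relabeled instance of Theorem \ref{thm:swmsr}. First I would reinterpret the colluding misbehaving set $\A$ as playing the role of the leader set: define a new leader set $\Le' = \A$ and a new follower set $S_f' = \V \backslash \A$, and observe that the corollary's hypotheses translate precisely into those of Theorem \ref{thm:swmsr} under this correspondence.

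The crux is to verify that, under this relabeling, every agent in $\A$ qualifies as a \emph{normally behaving leader} in the sense of Definitions \ref{def:misbehaving} and \ref{def:normal}. By hypothesis each $j \in \A$ transmits a single constant value $x_j^i[t] = v$ to all of its out-neighbors throughout $[t_1 - T, t_2)$. Setting the reference function to the constant $f_r[t] = v$ and defining each such agent's state by $x_j[t] = v$, one checks directly that the agent (i) updates its state according to \eqref{eq:leader}, (ii) communicates its true state to every out-neighbor, and (iii) sends identical values to all out-neighbors, so none of the three misbehaving conditions is triggered. Hence under the relabeling the misbehaving set is empty, which is trivially $F$-local with respect to any $t_0$; the normally behaving leaders are all of $\A$, and the normally behaving followers are all of $\V \backslash \A$.

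With the relabeling in place, the assumption that $\D[t]$ is strongly $(T,t_0,2F+1)$-robust w.r.t. $\A$ becomes exactly strong $(T,t_0,2F+1)$-robustness w.r.t. $\Le'$, the reference $f_r$ is constant on $[t_1 - T, t_2)$, and the identity $|\V \backslash \A| = |S_f'|$ makes the corollary's time-horizon condition coincide with that of Theorem \ref{thm:swmsr}. Invoking Theorem \ref{thm:swmsr} then yields exponential decay of $V[t]$ on $[t_1 - T, t_2)$, and when $t_2 = \infty$ it gives $\limfy{t} V[t] = \limfy{t} \max_{i \in \V \backslash \A,\, l \in \A} |x_i[t] - x_l[t]| = 0$. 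Since every leader state equals the common value $x_j^i[t]$, this is precisely $\limfy{t} \max_{i \in \V \backslash \A} |x_i[t] - x_j^i[t]| = 0$.

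I expect the main obstacle to be conceptual rather than computational: justifying that agents formally labeled as adversaries may be treated as bona fide leaders. The observation legitimizing this is that Theorem \ref{thm:swmsr} and its proof only ever use the \emph{behavior} of the leaders, namely that each emits a single constant value that is consistent across its out-neighbors, and never invoke any notion of benign intent. Consequently a colluding adversary broadcasting a fixed common value is computationally indistinguishable from a normal leader propagating that value as its reference, and the convergence guarantee transfers verbatim. A secondary point worth stating explicitly is that the \emph{common} value shared across all of $\A$ is what permits a single limit point $v$ to exist; were distinct adversaries to broadcast distinct constants, there would be several competing ``leader'' states and no common target for the followers.
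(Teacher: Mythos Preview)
Your proposal is correct and matches the paper's own proof almost exactly: the paper simply says to treat $\A$ as $\Le$, $\V \backslash \A$ as $\sfn$, and the common transmitted value $x_j^i[t]$ as $f_r[t]$, then invoke Theorem~\ref{thm:swmsr}. Your write-up is more detailed in justifying why the relabeled adversaries satisfy the behavioral requirements of normal leaders, but the underlying idea is identical; one small refinement would be to note explicitly that the relevant cardinality in Theorem~\ref{thm:swmsr} is $|\sfn|$, which equals $|S_f'| = |\V \backslash \A|$ precisely because the relabeled misbehaving set is empty.
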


\begin{proof}
	The proof follows from Theorem \ref{thm:swmsr} by treating $\A$ as the set $\Le$, $\V \backslash \A$ as the set $\sfn$, and $x_j^i[t]$ as the signal $f_r[t]$. Note that by Definition \ref{def:misbehaving}, $x_j^i[t]$ need not be equal to any of the actual states $x_j[t]$ of $j \in \A$.
\end{proof}

In short, if the digraph $\D$ for a leaderless consensus network is strongly $(T,t_0,2F+1)$-robust w.r.t. the adversary set $\A$ and the adversaries collude to send a common constant to their out-neighbors on sufficiently long time intervals, the error between the normal agents and the adversarial signal will decrease exponentially. These conditions imply that the adversaries have the ability to drive the entire network to arbitrary state values.
When working with a given digraph $\D[t]$, this result demonstrates the need for awareness of the agent subsets $S$ such that $\D[t]$ is strongly $(T,t_0,2F+1)$-robust w.r.t. $S$. Adversaries seeking to obtain control of the network will succeed if such subsets are successfully compromised. 
However, determining methods to search for all such possible subsets $S$ is out of the scope of the current technical note. We leave exploration in this direction for future work.

\section{Simulations}
\label{sec:sim}

This paper presents a leader-follower framework which can tolerate up to $F$ arbitrarily misbehaving nodes. It can be applied to a wide range of problems where a network of agents need to be driven to a desired reference value by a set of leaders. Some examples  of such reference values include a reference altitude for unmanned aerial vehicles, a reference rendezvous time for multiple unmanned ground vehicles, and a reference radius for a circular patrolling path \cite{saldana2017resilient}, to name only a few.

The simulations consider agents in time-varying $k$-circulant digraphs. The Appendix contains the definition of $k$-circulant digraphs and details about the conditions under which $k$-circulant digraphs are strongly $r$-robust w.r.t. a subset. 
For each simulation the network topology switches between the three graphs depicted in Figure \ref{fig:SWMSRgraphs}. The union of the three graphs forms a $7$-circulant digraph.
The simulations consider the presence of malicious adversaries, which may send the \emph{same} misinformation to their respective out-neighbors \cite{leblanc2013resilient}.
In all simulations, agents have no knowledge as to whether their in-neighbors are normal, malicious, or behaving as leaders. In addition, $t_0 = 0$ and the agents' initial states are random values on the interval $[-25,25]$ for all agents in $(\V \backslash \Le)$.
	The results of the first simulation are shown in figure \ref{fig:one}.  In this simulation, the number of agents is 15, with $\Le = \{4,5,...,8\}$ (5 leaders). The time window is $T = 12$ steps, and the network switches graphs every 4 seconds ($\G_1,\G_2,\G_3,\G_1\ldots$), where the graphs are depicted in Figure \ref{fig:SWMSRgraphs}. By the results of the Appendix, the digraph is strongly $(12, 0, 5)$-robust w.r.t. $\Le$.
	For all normal follower agents, parameter $F = 2$. Two of the agents in the network behave maliciously. The function $f_r[t]$ is simply the constant $f_r[t] = 30$. 
	The error between the normally behaving agents' states (denoted by colored lines) and the normally behaving leaders' states (the solid black line) decreases exponentially in the presence of two misbehaving agents (the dotted red lines).
	The second simulation, depicted in Figure \ref{fig:two}, considers a scenario where $f_r[t]$ takes on different values over time. In this simulation, the network size is $30$ agents, with $\Le = \{1,2,\ldots,7\}$ (7 leaders). The time window for each agent is $T = 30$, and the network switches between graphs every 10 seconds ($\G_1,\G_2,\G_3,\G_1\ldots$). By the results of the Appendix, the digraph is strongly $(30, 0, 7)$-robust w.r.t. $\Le$. For all normal follower agents, parameter $F=3$. Three of the agents in the network behave maliciously. The error between the normally behaving agents and the normally behaving leaders decreases exponentially on the time intervals where $f_r[t]$ is constant as per the conditions of Theorem \ref{thm:swmsr}.

\begin{figure}
\centering
\includegraphics[width=.7\columnwidth]{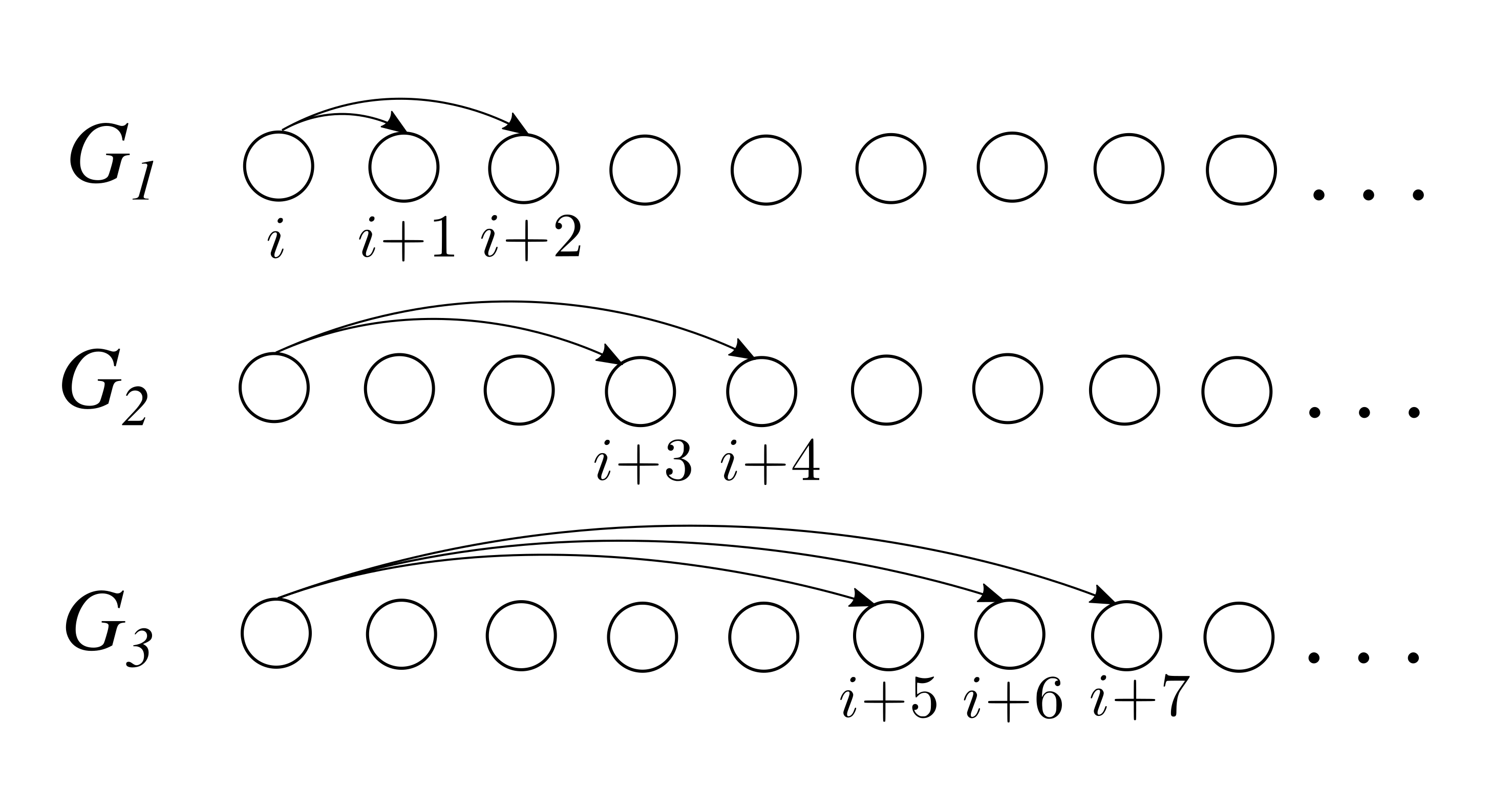}
\caption{Time-varying graphs used in the last two simulations. In each graph $\G_j$, $\forall i \in \V$ each agent $i$ sends its state information to the agents depicted.}
\label{fig:SWMSRgraphs}
\end{figure}

\begin{figure}
\centering
\includegraphics[width=.85\columnwidth]{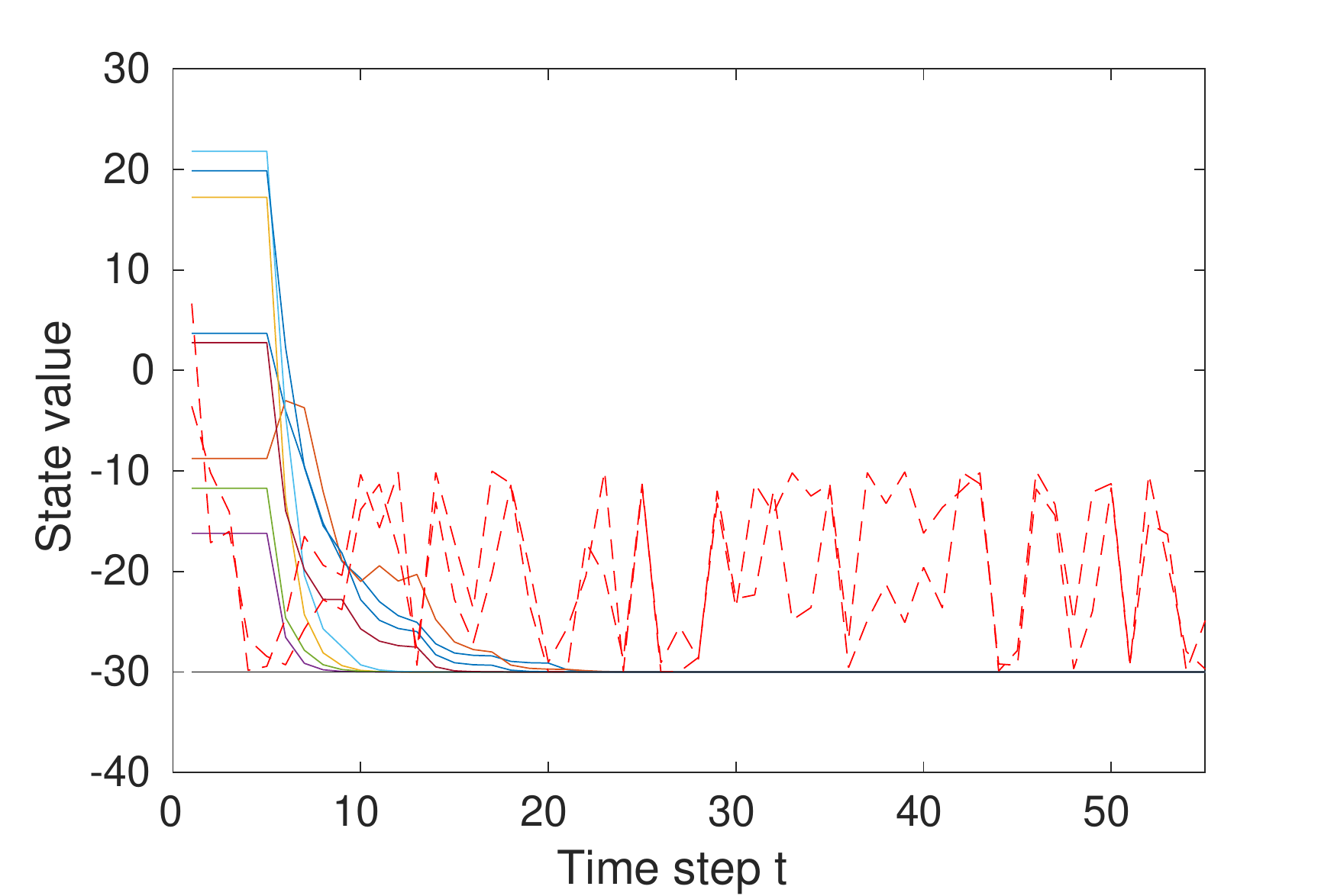}
\caption{Leader-follower simulation using the SW-MSR algorithm with a constant reference value in the presence of 2 malicious agents.}\label{fig:one}
\end{figure}

\begin{figure}
\centering
\includegraphics[width=.85\columnwidth]{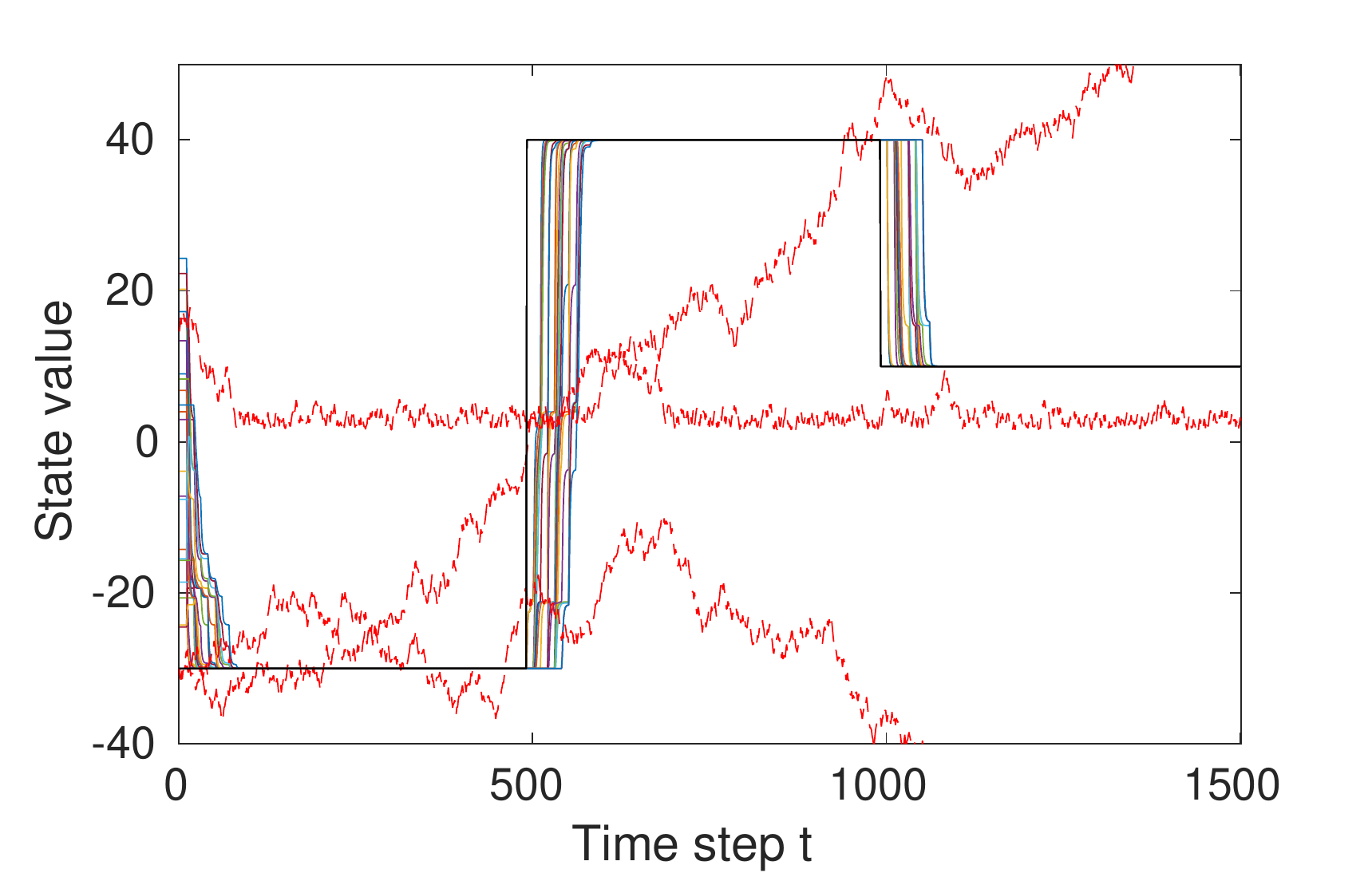}
\caption{Leader-follower simulation using the SW-MSR algorithm with a  time-varying reference value in the presence of 3 malicious agents. 
}\label{fig:two}
\end{figure}

\section{Conclusion}
\label{sec:concl}

This paper presented conditions for agents with discrete-time dynamics to resiliently track a reference signal propagated by a set of leader agents despite a bounded number of the leaders and followers behaving adversarially. Future work includes considering time-varying graphs, asynchronous communication, and also continuous-time systems.

\bibliographystyle{IEEEtran}

\bibliography{technote.bib}

% Generated by IEEEtran.bst, version: 1.14 (2015/08/26)
\begin{thebibliography}{10}
\providecommand{\url}[1]{#1}
\csname url@samestyle\endcsname
\providecommand{\newblock}{\relax}
\providecommand{\bibinfo}[2]{#2}
\providecommand{\BIBentrySTDinterwordspacing}{\spaceskip=0pt\relax}
\providecommand{\BIBentryALTinterwordstretchfactor}{4}
\providecommand{\BIBentryALTinterwordspacing}{\spaceskip=\fontdimen2\font plus
\BIBentryALTinterwordstretchfactor\fontdimen3\font minus
  \fontdimen4\font\relax}
\providecommand{\BIBforeignlanguage}[2]{{%
\expandafter\ifx\csname l@#1\endcsname\relax
\typeout{** WARNING: IEEEtran.bst: No hyphenation pattern has been}%
\typeout{** loaded for the language `#1'. Using the pattern for}%
\typeout{** the default language instead.}%
\else
\language=\csname l@#1\endcsname
\fi
#2}}
\providecommand{\BIBdecl}{\relax}
\BIBdecl

\bibitem{kieckhafer1994reaching}
R.~M. Kieckhafer and M.~H. Azadmanesh, ``Reaching approximate agreement with
  mixed-mode faults,'' \emph{IEEE Transactions on Parallel and Distributed
  Systems}, vol.~5, no.~1, pp. 53--63, 1994.

\bibitem{leblanc2013resilient}
H.~J. LeBlanc, H.~Zhang, X.~Koutsoukos, and S.~Sundaram, ``Resilient asymptotic
  consensus in robust networks,'' \emph{IEEE Journal on Selected Areas in
  Communications}, vol.~31, no.~4, pp. 766--781, 2013.

\bibitem{dibaji2017resilient}
S.~M. Dibaji and H.~Ishii, ``Resilient consensus of second-order agent
  networks: Asynchronous update rules with delays,'' \emph{Automatica},
  vol.~81, pp. 123--132, 2017.

\bibitem{saldana2017resilient}
D.~Saldana, A.~Prorok, S.~Sundaram, M.~F. Campos, and V.~Kumar, ``Resilient
  consensus for time-varying networks of dynamic agents,'' in \emph{2017
  American Control Conference (ACC)}.\hskip 1em plus 0.5em minus 0.4em\relax
  IEEE, 2017, pp. 252--258.

\bibitem{dibaji2018resilient}
S.~M. Dibaji, H.~Ishii, and R.~Tempo, ``Resilient randomized quantized
  consensus,'' \emph{IEEE Transactions on Automatic Control}, vol.~63, no.~8,
  pp. 2508--2522, 2018.

\bibitem{Dimarogonas2009leader}
D.~V. Dimarogonas, P.~Tsiotras, and K.~J. Kyriakopoulos, ``Leader--follower
  cooperative attitude control of multiple rigid bodies,'' \emph{Systems \&
  Control Letters}, vol.~58, no.~6, pp. 429--435, 2009.

\bibitem{Ren2007}
W.~Ren, ``{Multi-vehicle consensus with a time-varying reference state},''
  \emph{Systems and Control Letters}, vol.~56, no. 7-8, pp. 474--483, 2007.

\bibitem{Ren2008consensus}
------, ``Consensus tracking under directed interaction topologies: Algorithms
  and experiments,'' in \emph{American Control Conference, 2008}.\hskip 1em
  plus 0.5em minus 0.4em\relax IEEE, 2008, pp. 742--747.

\bibitem{leblanc2014}
H.~J. LeBlanc and F.~Hassan, ``Resilient distributed parameter estimation in
  heterogeneous time-varying networks,'' in \emph{Proceedings of the 3rd
  international conference on High confidence networked systems}.\hskip 1em
  plus 0.5em minus 0.4em\relax ACM, 2014, pp. 19--28.

\bibitem{Mitra2016secure}
A.~Mitra and S.~Sundaram, ``{Secure distributed observers for a class of linear
  time invariant systems in the presence of Byzantine adversaries},'' in
  \emph{Decision and Control (CDC), 2016 IEEE 55th Conference on}.\hskip 1em
  plus 0.5em minus 0.4em\relax IEEE, 2016, pp. 2709--2714.

\bibitem{mitra2018byzantine}
------, ``Byzantine-resilient distributed observers for {LTI} systems,''
  \emph{arXiv preprint arXiv:1802.09651}, 2018.

\bibitem{usevitch2018resilient}
J.~Usevitch and D.~Panagou, ``Resilient leader-follower consensus to arbitrary
  reference values,'' in \emph{2018 Annual American Control Conference
  (ACC)}.\hskip 1em plus 0.5em minus 0.4em\relax IEEE, 2018, pp. 1292--1298.

\bibitem{mitra2018secure}
A.~Mitra and S.~Sundaram, ``Secure distributed state estimation of an lti
  system over time-varying networks and analog erasure channels,'' in
  \emph{2018 Annual American Control Conference (ACC)}.\hskip 1em plus 0.5em
  minus 0.4em\relax IEEE, 2018, pp. 6578--6583.

\bibitem{koo2006reliable2}
C.-Y. Koo, V.~Bhandari, J.~Katz, and N.~H. Vaidya, ``Reliable broadcast in
  radio networks: The bounded collision case,'' in \emph{Proceedings of the
  twenty-fifth annual ACM symposium on Principles of distributed
  computing}.\hskip 1em plus 0.5em minus 0.4em\relax ACM, 2006, pp. 258--264.

\bibitem{leblanc2013algorithms}
H.~J. LeBlanc and X.~D. Koutsoukos, ``Algorithms for determining network
  robustness,'' in \emph{Proceedings of the 2nd ACM international conference on
  High confidence networked systems}.\hskip 1em plus 0.5em minus 0.4em\relax
  ACM, 2013, pp. 57--64.

\bibitem{usevitch2019determining}
J.~Usevitch and D.~Panagou, ``Determining r-robustness of digraphs using mixed
  integer linear programming,'' in \emph{2019 Annual American Control
  Conference (ACC), to appear}.\hskip 1em plus 0.5em minus 0.4em\relax IEEE,
  2019.

\bibitem{usevitch2019MILP}
------, ``Determining r-and (r, s)-robustness of digraphs using mixed integer
  linear programming,'' \emph{arXiv preprint arXiv:1901.11000}, 2019.

\bibitem{usevitch2017r}
------, ``r-robustness and (r, s)-robustness of circulant graphs,'' in
  \emph{2017 IEEE 56th Annual Conference on Decision and Control (CDC)}.\hskip
  1em plus 0.5em minus 0.4em\relax IEEE, 2017, pp. 4416--4421.

\bibitem{boesch1984circulants}
F.~Boesch and R.~Tindell, ``Circulants and their connectivities,''
  \emph{Journal of Graph Theory}, vol.~8, no.~4, pp. 487--499, 1984.

\bibitem{elspas1970graphs}
B.~Elspas and J.~Turner, ``Graphs with circulant adjacency matrices,''
  \emph{Journal of Combinatorial Theory}, vol.~9, no.~3, pp. 297--307, 1970.

\end{thebibliography}

\section{Appendix: $k$-Circulant Digraphs}
\label{appendix}

$k$-Circulant graphs \cite{usevitch2017r} are a particular class of graphs which, given a properly selected subset $S \subset \V$, are strongly $r$-robust w.r.t. $S$. To justify our choice of using $k$-circulant digraphs in our demonstrations, we provide formal conditions under which these graphs are strongly $r$-robust w.r.t. a given $S$. These results were first presented in our prior work \cite{usevitch2018resilient}.

First, we give the definitions of circulant graphs.
An undirected graph of $n$ nodes is called circulant if there exists a set $\{a_1, a_2, \ldots, a_m \in \mathbb{Z}_{\geq 0}: a_1 < a_2 < \ldots < a_m < n\}$ such that $\forall i \in \V$, $(i, \left[i \pm a_1 \right] \text{mod}\, n) \in \E, \ldots, (i,\left[i \pm a_m \right] \text{mod}\, n) \in \E$  \cite{boesch1984circulants}. We call such a graph an \emph{undirected circulant graph} and denote it as $C_n(\pm a_1, \pm a_2, \ldots, \pm a_m) = (\V, \E)$. The name \emph{circulant} derives from the adjacency matrix for such graphs being a circulant matrix \cite{boesch1984circulants,elspas1970graphs}.
These graphs are constructed over the additive group of integers modulo $n$ (the nodes $n+a$ and $a$ are congruent modulo $n$). Similarly, we call a digraph of $n$ nodes \emph{circulant} if there exists a set $\{a_1, a_2, \ldots, a_m: 0 < a_1 < a_2 < \ldots < a_m < n\},\ m \in \mathbb{Z}_{\geq 0}$ such that $\forall i \in \V$, $(i, \left[i+a_1 \right] \text{mod}\, n) \in \mathcal{E}, \ldots, (i,\left[i+a_m \right] \text{mod}\, n) \in \mathcal{E}$. We denote such a graph as $C_n(a_1, a_2, \ldots, a_m) = (\V, \E)$ and call it a \emph{directed circulant graph} or \emph{circulant digraph}.

$k$-Circulant digraphs and $k$-circulant undirected graphs are specific classes of circulant graphs defined as follows:
\begin{define}
Let $n \in \mathbb{Z},\ n \geq 2$ and let $k \in \mathbb{Z}: 1 \leq k < n-1$. A $k$-circulant digraph  is any circulant digraph of the form $C_n(1,2,3,\ldots,k) = (\V, \E)$.
\end{define}

\begin{define}
Let $n \in \mathbb{Z},\ n \geq 2$ and let $k \in \mathbb{Z}: 1 \leq k \leq \ceil{n/2}-1$. A $k$-circulant undirected graph  is any circulant graph of the form $C_n(\pm 1,\pm 2,\pm 3,\ldots,\pm k) = (\V, \E)$.
\end{define}

We show that these graphs can demonstrate strong $r$-robustness and TLF robustness with parameter $F$. As per the definition of circulant graphs, we assume all agents are indexed $1,\ldots,n$. In our next proof we refer to sets of consecutive agents by index. An example is $P_L = \{2,3,4,5,\ldots,9\}$ in a network of $n=15$ agents. Since the index numbers are defined on the set of integers modulo $n$, the set $P_L = \{14, 15,1,2\}$ would also be a set of consecutive agents in a network of $n=15$ agents.

\begin{theorem}
\label{thm:kcirc}
A $k$-circulant digraph $\D = C_n\{1,2,\ldots,k\}$ is strongly $r$-robust with respect to $\Le \subset \V$ if $\D$ contains a set of consecutive agents by index $P_L$ such that $|P_L| \leq k$ and $|P_L \cap \Le| \geq r$. 
\end{theorem}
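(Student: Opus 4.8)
The plan is to prove the statement directly from Definition \ref{def:strongr}: it suffices to show that every nonempty $C \subseteq \V \backslash \Le$ is $r$-reachable in the sense of Definition \ref{def:rreach}. First I would exploit the fact that $C_n(1,\ldots,k)$ is vertex-transitive under the rotation $i \mapsto (i+1) \bmod n$, which is a graph automorphism; since this rotation carries $\Le$ and $P_L$ together and strong $r$-robustness is a relabeling-invariant property, I may assume without loss of generality that $P_L = \{1,2,\ldots,L\}$ with $L = |P_L| \leq k$. I would also record the chain of inequalities $r \leq |P_L \cap \Le| \leq |P_L| = L \leq k$, so that $r \leq L \leq k$ holds throughout, and recall that in this digraph the in-neighbor set of any node $j$ is the block of $k$ consecutive predecessors $\V_j = \{j-1, j-2, \ldots, j-k\} \bmod n$.

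Next I would fix an arbitrary nonempty $C \subseteq \V \backslash \Le$ and single out a witness node. Scanning indices cyclically forward starting at $L+1$, let $i^*$ be the first index lying in $C$ (it exists since $C$ is nonempty), and let $g \geq 0$ be the length of the non-$C$ ``run'' immediately preceding it, so that the indices scanned before $i^*$ are all outside $C$. The goal is to show $|\V_{i^*} \backslash C| \geq r$, which certifies $r$-reachability of $C$. The central observation is that the predecessor window $\V_{i^*}$ can only contain \emph{controlled} indices: every run index inside the window lies outside $C$ by the choice of $i^*$, and every leader in $P_L$ lies outside $C$ because $C \cap \Le = \emptyset$. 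In the generic configuration the $k$ consecutive predecessors of $i^*$ are contained in $P_L \cup \{\text{run}\}$, so the only window entries that could possibly belong to $C$ are the non-leader indices of $P_L$, of which there are at most $L - r$. This yields $|\V_{i^*} \backslash C| \geq k - (L-r) = r + (k-L) \geq r$, using $k \geq L$, and the result follows since $C$ was arbitrary.

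The main obstacle, and the step I would treat most carefully, is justifying the central counting bound \emph{uniformly}, because which indices actually fall into the length-$k$ block $\V_{i^*}$ depends on how that block sits relative to $P_L$ modulo $n$. I would organize this by the position of $i^*$: in the non-wrapping regime ($i^* \geq k+1$) the window lies entirely in $P_L \cup \{\text{run}\}$ and the bound $|\V_{i^*} \cap C| \leq L - r$ applies verbatim; in the wrapping regime the argument splits into two benign sub-cases, namely the window engulfing all of $P_L$ (so it already holds the $\geq r$ leaders directly) or the window lying entirely within the non-$C$ run (so all $k \geq r$ predecessors are outside $C$). Apart from this modular bookkeeping and a harmless nondegeneracy assumption on $n$ relative to $k+L$, everything else — existence of $i^*$, the predecessor description of $\V_j$, and the final arithmetic $r + (k-L) \geq r$ — is routine once the reduction $P_L = \{1,\ldots,L\}$ and the inequalities $r \leq L \leq k$ are in place.
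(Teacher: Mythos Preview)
Your approach is essentially the paper's: both relabel so that $P_L$ is a fixed block of $L$ consecutive indices, scan cyclically forward starting immediately after $P_L$, identify the first index $i^*$ (the paper writes $p$) lying in $C$, and then argue $|\V_{i^*}\backslash C|\geq r$. The paper places $P_L=\{n-L+1,\ldots,n\}$ and scans from $1$; you place $P_L=\{1,\ldots,L\}$ and scan from $L+1$ --- these are equivalent up to a rotation.

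Where you differ is in how the bound is established. The paper uses a one-line monotonicity step: since $\V_p=(\V_{p-1}\setminus\{p-1-k \bmod n\})\cup\{p-1\}$ and $p-1\notin C$ by minimality of $p$, one has $|\V_p\cap(\V\backslash C)|\geq|\V_{p-1}\cap(\V\backslash C)|$; chaining back to $p=1$ gives $|\V_{i^*}\backslash C|\geq|\V_1\backslash C|\geq|P_L\cap\Le|\geq r$. This sliding-window argument is uniform in $p$ and needs no case split and no side hypothesis on $n$. Your direct counting with wrap/non-wrap cases reaches the same conclusion but is heavier, and the ``harmless nondegeneracy assumption on $n$ relative to $k+L$'' is not part of the theorem. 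Without it your two wrapping sub-cases are not exhaustive: when the scan itself has wrapped (so $i^*\in P_L$) and $n<k+L+1-i^*$, the window neither engulfs all of $P_L$ nor lies entirely in the run. The repair is immediate --- your bound $|\V_{i^*}\cap C|\leq L-r$ still applies, because the only window entries not already certified to be outside $C$ are non-leader elements of $P_L$ --- but the paper's sliding trick avoids the bookkeeping altogether.
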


\begin{proof}
Suppose $k \geq |P_L|$ and $|P_L \cap \Le| \geq r$. Without loss of generality, let the first agent in $P_L$ be labeled as agent $(n-|P_L|+1)$ and the last agent in $P_L$ as agent $n$. We must show that all nonempty $C \subseteq \V \backslash \Le$ are $r$-reachable. If agent $1 \in C$ then $C$ is $r$-reachable since $ \{(n-|P_L|+1),\ldots,n \} \subseteq \V_{1}$ which implies $|\V_{1} \cap (\V \backslash C)| \geq r$. Next, suppose that agent $1 \notin C$ and $2 \in C$. Since $\{(n-|P_L|+2),\ldots,1 \}\subseteq \V_{2}$, this implies that $|\V_{2} \cap (\V \backslash C)| \geq |\V_{1} \cap (\V \backslash C)| \geq r$ and therefore $C$ is $r$-reachable. This reasoning can be continued inductively by assuming $\{1,\ldots p-1\} \notin C$, $p \in C$, and observing that $|\V_{p} \cap (\V \backslash C)| \geq |\V_{p-1} \cap (\V \backslash C)|$. Analyzing the remaining subsets of this form in the graph yields the result. Note that if $p$ is ever the number of an agent in $\Le$, then we need not consider it ever being in $C$ and the analysis can be continued with the next agent not in $\Le$.
\end{proof}

Similar results also hold for undirected $k$-circulant graphs:

\begin{theorem}
An undirected circulant graph of the form $\mathcal{G} = C_n\{\pm 1,\pm 2,\ldots,\pm k\}$ is strongly $r$-robust with respect to $\Le \subset \V$ if $\mathcal{G}$ contains a set of consecutive agents $P_L$ such that $|P_L| \leq k$ and $|P_L \cap \Le| \geq r$. 
%Moreover, $\mathcal{G}$ satisfies the conditions for TLF robustness with parameter $F$ if $|P_L|\leq k-F$ and $|P_L \cap \Le| \geq F+1$.
\end{theorem}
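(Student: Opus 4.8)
The plan is to reduce the undirected statement to the directed result already established in Theorem \ref{thm:kcirc}, exploiting the fact that $r$-reachability can only improve when edges are added. First I would observe that the undirected $k$-circulant graph $\G = C_n(\pm 1, \ldots, \pm k)$ and the directed $k$-circulant graph $\D = C_n(1, \ldots, k)$ are defined on the same vertex set $\V$ with the same parameter $k$; since the undirected definition requires $k \leq \ceil{n/2} - 1 < n - 1$, the directed graph $\D$ is well defined. Writing $\V_i$ for the neighbor set of agent $i$ in $\G$ and $\V_i^{\mathrm{d}}$ for the in-neighbor set of agent $i$ in $\D$, a direct computation of the two sets gives $\V_i = \{i-k, \ldots, i-1\} \cup \{i+1, \ldots, i+k\} \pmod n$ and $\V_i^{\mathrm{d}} = \{i-k, \ldots, i-1\} \pmod n$, so that $\V_i^{\mathrm{d}} \subseteq \V_i$ for every $i \in \V$.

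Next I would invoke the monotonicity of $r$-reachability in the edge set: for any $C \subseteq \V$ and any agent $i$, the containment $\V_i^{\mathrm{d}} \subseteq \V_i$ implies $|\V_i \backslash C| \geq |\V_i^{\mathrm{d}} \backslash C|$. Hence if some $i \in C$ witnesses $r$-reachability of $C$ in $\D$ (i.e. $|\V_i^{\mathrm{d}} \backslash C| \geq r$), then the same agent witnesses $r$-reachability of $C$ in $\G$. Since the hypotheses of the present theorem are identical to those of Theorem \ref{thm:kcirc} — the same consecutive block $P_L$ with $|P_L| \leq k$ and $|P_L \cap \Le| \geq r$ — Theorem \ref{thm:kcirc} guarantees that $\D$ is strongly $r$-robust w.r.t. $\Le$, i.e. every nonempty $C \subseteq \V \backslash \Le$ is $r$-reachable in $\D$. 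By the containment above, every such $C$ is then $r$-reachable in $\G$, which is exactly strong $r$-robustness of $\G$ w.r.t. $\Le$.

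I expect essentially no obstacle here: the entire argument rests on the elementary neighbor-set containment $\V_i^{\mathrm{d}} \subseteq \V_i$, and the only point requiring care is the index arithmetic modulo $n$ needed to confirm that the backward neighbors $\{i-1, \ldots, i-k\}$ in the undirected graph coincide with the in-neighbors of the directed graph. Should one prefer a self-contained argument, the induction in the proof of Theorem \ref{thm:kcirc} transfers almost verbatim: relabel so that $P_L = \{n - |P_L| + 1, \ldots, n\}$, scan agents $1, 2, \ldots$ in order, and show that the first one lying in $C$ has at least $r$ neighbors drawn from $\Le \cap P_L$ together with the already-cleared agents $\{1, \ldots, p-1\}$, where the compensation of lost leader-neighbors by cleared agents again hinges on $|P_L| \leq k$. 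I would, however, favor the reduction, since it is shorter and reuses the established result directly.
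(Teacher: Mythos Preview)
Your proposal is correct. The paper's own proof is a single sentence---``The same method as in Theorem \ref{thm:kcirc} can be applied to prove the result''---i.e.\ it asks the reader to rerun the inductive scan of agents $1,2,\ldots$ verbatim, now with the larger undirected neighbor sets. Your primary argument is genuinely different: rather than repeating the induction, you observe that the directed $k$-circulant graph sits inside the undirected one at the level of in-neighbor sets ($\V_i^{\mathrm{d}}\subseteq \V_i$), and that $r$-reachability is monotone under edge addition, so strong $r$-robustness transfers from $\D$ to $\G$ for free once Theorem~\ref{thm:kcirc} is in hand. This buys a shorter, cleaner proof that reuses the directed result as a black box and makes explicit \emph{why} the undirected case cannot be harder. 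The paper's approach, by contrast, is self-contained but redundant. Your parenthetical remark that the induction also transfers verbatim is exactly what the paper does, so you have in fact covered both routes.
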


\begin{proof}
The same method as in Theorem \ref{thm:kcirc} can be applied to prove the result.
\end{proof}
%----------------
%BIOGRAPHIES
%----------------

\vspace{-3.3em}
\begin{biography}[{\includegraphics[trim={0 1in 0 0},width=.9in,height=1.25in,clip,keepaspectratio]{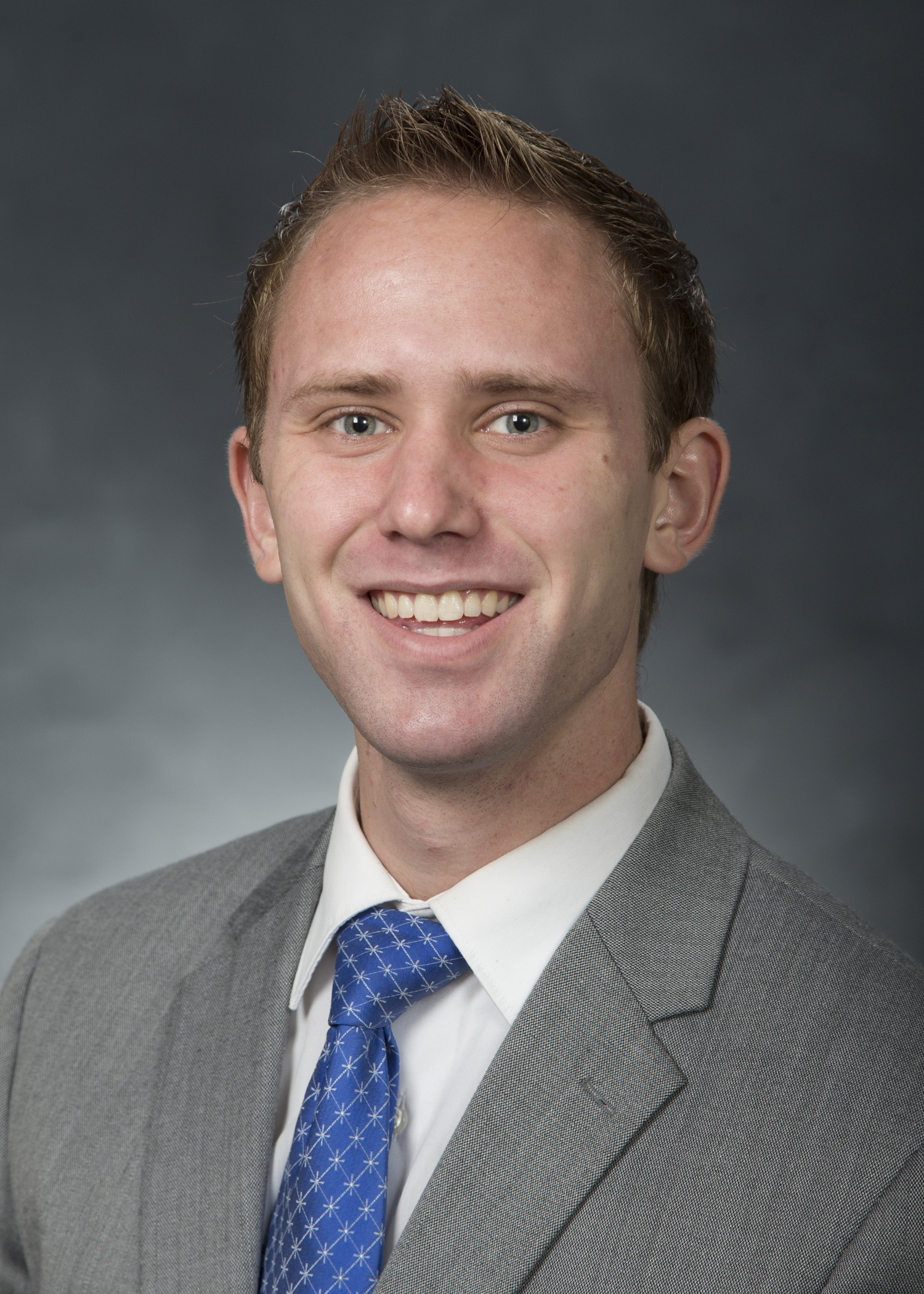}}]{James Usevitch} received his Bachelor's degree in Mechanical Engineering from Brigham Young University in 2016 and is currently pursuing a PhD in control theory in the Aerospace Engineering program at the University of Michigan. His research currently focuses on safety and security of multi-agent systems. He is a student member of the IEEE.
\end{biography}
\vspace{-4em}

\begin{biography}[{\includegraphics[width=1in,height=1.25in,clip,keepaspectratio]{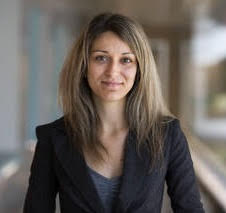}}]{Dimitra Panagou} received the Diploma and PhD degrees in Mechanical Engineering from the National Technical University of Athens, Greece, in 2006 and 2012, respectively. Since September 2014 she has been an Assistant Professor with the Department of Aerospace Engineering, University of Michigan. Prior to joining the University of Michigan, she was a postdoctoral research associate with the Coordinated Science Laboratory, University of Illinois, Urbana-Champaign (2012-2014), a visiting research scholar with the GRASP Lab, University of Pennsylvania (June 2013, fall 2010) and a visiting research scholar with the University of Delaware, Mechanical Engineering Department (spring 2009). Her research interests include the fields of multi-agent planning, coordination, control and estimation, with applications in safe and resilient unmanned aerial systems, robotic networks and autonomous multi-vehicle systems (ground, marine, aerial, space). She is a recipient of the NASA 2016 Early Career Faculty Award and of the Air Force Office of Science Research 2017 Young Investigator Award, and a member of the IEEE and the AIAA.
\end{biography}

\end{document}